\numberwithin{equation}{section}
\newcommand{\norm}[1]{\left\lVert#1\right\rVert}
\theoremstyle{plain}
\newtheorem{theorem}{Theorem}[section]
\newtheorem{lemma}{Lemma}[section]
\newtheorem{corollary}{Corollary}[section]
\theoremstyle{definition}
\newtheorem{definition}{Definition}[section]
\newtheorem{assumption}{Assumption}[section]
\theoremstyle{remark}
\newtheorem{rk}{Remark}[section]
\let\expandafter\oldproof\csname\string\proof\endcsname
\let\oldendproof\endproof
\renewenvironment{proof}[1][\proofname]{%
  \oldproof[\noindent\textbf{#1.} ]%
}{\oldendproof}
\DeclareMathOperator\arctanh{arctanh}
\newcommand{\mcH}{\mathcal{H}}
\newcommand{\E}{\mathbb{E}}
\newcommand{\mP}{\mathbb{P}}
\newcommand{\be}{\begin{equation}}
\newcommand{\ee}{\end{equation}}
\newcommand{\by}{\begin{eqnarray*}}
\newcommand{\ey}{\end{eqnarray*}}
\renewcommand{\leq}{\leqslant}
\renewcommand{\geq}{\geqslant}
\definecolor{dark-red}{rgb}{0.4,0.15,0.15}
\definecolor{dark-blue}{rgb}{0.15,0.15,0.4}
\definecolor{medium-blue}{rgb}{0,0,0.5}
\begin{document}
\title[Improved Metropolis-Hastings algorithms via landscape modification]{Improved Metropolis-Hastings algorithms via landscape modifcation with applications to simulated annealing and the Curie-Weiss model}
\author{Michael C.H. Choi}
\address{Department of Statistics and Data Science and Yale-NUS College, National University of Singapore, Singapore}
\email{michael.choi@yale-nus.edu.sg}

\date{\today}
\maketitle

\begin{abstract}
In this paper, we propose new Metropolis-Hastings and simulated annealing algorithms on finite state space via modifying the energy landscape. The core idea of landscape modification rests on introducing a parameter $c$, in which the landscape is modified once the algorithm is above this threshold parameter to encourage exploration, while the original landscape is utilized when the algorithm is below the threshold for exploitation purpose. We illustrate the power and benefits of landscape modification by investigating its effect on the classical Curie-Weiss model with Glauber dynamics and external magnetic field in the subcritical regime. This leads to a landscape-modified mean-field equation, and with appropriate choice of $c$ the free energy landscape can be transformed from a double-well into a single-well, while the location of the global minimum is preserved on the modified landscape. Consequently, running algorithms on the modified landscape can improve the convergence to the ground-state in the Curie-Weiss model. In the setting of simulated annealing, we demonstrate that landscape modification can yield improved or even subexponential mean tunneling time between global minima in the low-temperature regime by appropriate choice of $c$, and give convergence guarantee using an improved logarithmic cooling schedule with reduced critical height. We also discuss connections between landscape modification and other acceleration techniques such as Catoni's energy transformation algorithm, preconditioning, importance sampling and quantum annealing. The technique developed in this paper is not only limited to simulated annealing and is broadly applicable to any difference-based discrete optimization algorithm by a change of landscape.
		
	\smallskip
	
	\noindent \textbf{AMS 2010 subject classifications}: 60J27, 60J28
	
	\noindent \textbf{Keywords}: Metropolis-Hastings; simulated annealing; spectral gap; Curie-Weiss; metastability; landscape modification; energy transformation
\end{abstract}



\section{Introduction}

Given a target Gibbs distribution with Hamiltonian function $\mcH$ and temperature $\epsilon$, the Metropolis-Hastings (MH) algorithm is a popular and important Markov chain Monte Carlo algorithm that has been applied to various sampling and optimization problems in a wide range of disciplines including but not limited to Bayesian computation, statistical physics and theoretical computer science. While there are many improved variants of the MH algorithm that have been investigated in the literature, this paper centers around a method that is known as landscape modification. This technique is particularly suitable in the context of stochastic minimization with respect to $\mcH$, either by running the MH algorithm at a low enough temperature $\epsilon$ or by driving the temperature $\epsilon$ to zero as in simulated annealing. The core idea of landscape modification relies on targeting a modified Hamiltonian function instead of $\mcH$, where the modification or the transformation is based upon two parameters, namely $f$ and the threshold parameter $c$. On the part of landscape of $\mcH$ that is below $c$, the original landscape is utilized that allows for exploitation and concentration of the MH algorithm in the low temperature. On the other hand, on the region of landscape of $\mcH$ which is above $c$, the function $f$ is applied to transform this part of landscape to encourage and facilitate exploration of the chain. Precisely, the acceptance-rejection probability on this part is increased due to the transformation, which consequently leads to a higher transition rate of the modified MH chain that promotes exploration.

The gain of landscape modification stems from the perspective of critical height, a notion that measures the difficulty of the landscape in a broad sense. With appropriate tuning of both $f$ and $c$, the critical height is reduced, which consequently leads to improved simulated annealing algorithm or for instance reduced mean crossover time or relaxation time in the Curie-Weiss model that we shall discuss in detail.

We now summarize and highlight the key achievements and contributions of this paper:

\begin{enumerate}
	\item \textbf{Propose and analyze Metropolis-Hastings algorithms with landscape modification}. In Section \ref{sec:MHland}, we first define a new MH algorithm using a modified Hamiltonian function. The corresponding acceptance-rejection probability in the modified MH in general has an integral form, but we show that this integral can be readily calculated upon specializing into various choices of $f$ such as linear, quadratic or square root functions. This is then followed by an example on landscape modification in the Ehrenfest urn with a linear Hamiltonian in Section \ref{subsec:ehrenfest}, where we prove an upper bound on the spectral gap with polynomial dependence on the dimension, whereas the same technique yields an exponential dependence on the dimension for the classical MH. We provide a discussion on possible tuning strategies of $f$ and $c$ in Section \ref{subsec:tuning}. In the final subsection, that is, Section \ref{subsec:connection}, we elaborate on the similarities and differences between landscape modification and other acceleration techniques in Markov chain Monte Carlo literature such as Catoni's energy transformation algorithm \cite{Catoni96, Catoni98}, preconditioning of the Hamiltonian, importance sampling and quantum annealing \cite{WWZ16}.
	
	\item \textbf{Improved mean crossover time and relaxation time in the Curie-Weiss (CW) model with landscape modification}. In Section \ref{sec:CW}, we investigate the effect of landscape modification on the CW model. We first consider the CW model under a fixed external magnetic field in the subcritical regime, where the free energy landscape, as a function of the magnetization, has two local minima. Using a Glauber dynamics with landscape modification, we introduce a new mean-field equation, and with appropriate choice of $c$, the free energy landscape is transformed from a double-well into a single-well while preserving the location of the global minimum on the modified landscape. As a result running algorithms on the modified landscape can accelerate the convergence towards the ground-state. We prove a subexponential mean tunneling time in such setting and discuss related metastability results. Similar results are then extended to the random field CW model.
	
	\item \textbf{Improved simulated annealing algorithm with improved logarithmic cooling schedule}. In Section \ref{sec:discreteSA}, we consider the simulated annealing setting by driving the temperature down to zero. We define a clipped critical height $c^*$ (that depends on the threshold parameter $c$) associated with the improved simulated annealing algorithm, and prove tight spectral gap asymptotics based on this parameter. Consequently, this leads to similar asymptotic results concerning the total variation mixing time and the mean tunneling time in the low-temperature regime. Utilizing existing results concerning simulated annealing with time-dependent target function, we prove convergence guarantee of the landscape modified simulated annealing on finite state space with an improved logarithmic cooling schedule. These theoretical results are corroborated with numerical experiments in Section \ref{subsec:numerical} on a travelling salesman problem that offer numerical evidence to support the improved convergence results.	
\end{enumerate}

This paper can be considered as a sequel to an earlier work by the author \cite{C20KSA}. The original motivation of landscape modification is from \cite{FQG97}, who propose a variant of overdamped Langevin diffusion with state-dependent diffusion coefficient. In \cite{C20KSA}, we cast this idea of state-dependent noise via landscape modification in the setting of kinetic simulated annealing and develop an improved kinetic simulated annealing algorithm with convergence guarantee. In this paper, we recognize that this idea of landscape modification can also be applied to the finite state space setting in Metropolis-Hastings (MH) and simulated annealing, and investigate the benefits and speedups that this technique can bring in particular to the analysis of Curie-Weiss model and stochastic optimization. While the technique developed in \cite{C20KSA} can be readily applied to gradient-based continuous optimization algorithms, we emphasize that the landscape modification technique proposed in this paper can be analogously implemented in essentially all discrete optimization algorithms by a change of landscape and is not limited to simulated annealing or the CW model.

\subsection{Notations}

Throughout this paper, we adopt the following notations. For $x,y \in \mathbb{R}$, we write $x_+ = \max\{x,0\}$ to denote the non-negative part of $x$, and $x \wedge y = \min\{x,y\}$. For two functions $g_1, g_2 : \mathbb{R} \to \mathbb{R}$, we say that $g_1 = \mathcal{O}(g_2)$ if there exists a constant $C > 0$ such that for sufficiently large $x$, we have $|g_1(x)| \leq C g_2(x)$. We write $g_1 = o(g_2)$ if $\lim_{x \to \infty} g_1(x)/g_2(x) = 0$, and denote $g_1 \sim g_2$ if $\lim_{x \to \infty} g_1(x)/g_2(x) = 1$. We say that $g_1(x)$ is a subexponential function if $\lim_{x \to \infty} \frac{1}{x} \log g_1(x) = 0$.

\section{Metropolis-Hastings with landscape modification}\label{sec:MHland}

Let $\mathcal{X}$ be a finite state space under consideration, $Q = (Q(x,y))_{x,y \in \mathcal{X}}$ be the transition matrix of a reversible proposal chain with respect to the probability measure $\mu = (\mu(x))_{x \in \mathcal{X}}$, and $\mathcal{H}:\mathcal{X} \to \mathbb{R}$ be the target Hamiltonian function. Denote by $M^{0} = (M^{0}(x,y))_{x,y \in \mathcal{X}}$ to be the infinitesimal generator of the continuized classical Metropolis-Hastings chain $X^0 = (X^0(t))_{t \geq 0}$, with proposal chain $Q$ and target distribution being the Gibbs distribution $\pi^{0}(x) \propto e^{-\frac{1}{\epsilon} \mathcal{H}(x)}\mu(x)$ at temperature $\epsilon > 0$. Recall that its dynamics is given by
$$M^{0}(x,y) = M^0_{\epsilon}(Q,\pi^0)(x,y) := \begin{cases} Q(x,y) \min \left\{ 1,e^{\frac{1}{\epsilon}(\mathcal{H}(x)-\mathcal{H}(y))} \right\} = Q(x,y) e^{-\frac{1}{\epsilon}(\mcH(y)-\mcH(x))_+}, &\mbox{if } x \neq y; \\
	- \sum_{z: z \neq x} M^0(x,z), & \mbox{if } x = y. \end{cases}$$
We shall explain the upper script of $0$ in both $M^0$ and $\pi^0$ in Definition \ref{def:MHland} below.

Let us denote the ground-state energy level or the global minimum value of $\mcH$ to be $\mcH_{\textrm{min}} := \min_{x \in \mathcal{X}} \mcH(x)$. Instead of targeting directly the Hamiltonian $\mathcal{H}$ in the Gibbs distribution, we instead target the following modified or transformed function $\mcH_{\epsilon,c}^f$ at temperature $\epsilon$:
\begin{align}\label{eq:mcHeps}
	\mathcal{H}_{\epsilon}(x) = \mathcal{H}^f_{\epsilon,c}(x) := \int_{\mcH_{\textrm{min}}}^{\mathcal{H}(x)} \dfrac{1}{f((u-c)_+) + \epsilon}\,du,
\end{align}
where the function $f$ and the parameter $c$ are chosen to satisfy the following assumptions:
\begin{assumption}\label{assump:main}
	\begin{enumerate}		
		\item\label{it:assumpf} The function $f: \mathbb{R}^+ \to \mathbb{R}^+$ is differentiable, non-decreasing and satisfies
		$$f(0) = 0.$$
		\item $c$ satisfies $c \geq \mcH_{\textrm{min}}$. 
	\end{enumerate}
\end{assumption}
While it is impossible to calculate $\mcH_{\epsilon,c}^f$ without knowing $\mcH_{\textrm{min}}$ a priori, in a Metropolis-Hastings chain what matters is the difference of the energy function. For $x,y \in \mathcal{X}$, we see that
$$\mathcal{H}_{\epsilon,c}^f(y) - \mcH_{\epsilon,c}^f(x) = \int_{\mcH(x)}^{\mathcal{H}(y)} \dfrac{1}{f((u-c)_+) + \epsilon}\,du,$$
which does not depend on $\mcH_{\textrm{min}}$. 
In the special case when we choose $f = 0$, the above equation reduces to $\mcH_{\epsilon,c}^0(y) - \mcH_{\epsilon,c}^0(x) = \frac{1}{\epsilon}(\mcH(y) - \mcH(x))$. On the other hand, in the case where  $c < \mcH(x) < \mcH(y)$ and $f$ is chosen such that $f(z) > 0$ whenever $z > 0$, then we have  
$$\mathcal{H}_{\epsilon,c}^f(y) - \mcH_{\epsilon,c}^f(x) \leq \dfrac{1}{f(\mcH(x)-c) + \epsilon}(\mcH(y) - \mcH(x)) < \frac{1}{\epsilon} (\mcH(y) - \mcH(x)).$$
Since $f$ is assumed to be non-decreasing in Assumption \ref{assump:main}, the greater the difference between $\mcH(x)$ and $c$, the smaller the upper bound in the first inequality in the above equation, the smaller it is we would expect for $\mathcal{H}_{\epsilon,c}^f(y) - \mcH_{\epsilon,c}^f(x)$, the higher the transition rate $Q(x,y)\exp\{-(\mathcal{H}_{\epsilon,c}^f(y) - \mcH_{\epsilon,c}^f(x))\}$, and the landscape is modified in this sense. Intuitively speaking, when the algorithm is above the threshold parameter $c$ the landscape is modified such that the transition rate is higher to encourage exploration, while the original landscape is utilized for exploitation when the algorithm is below $c$. 

To illustrate the effect of the proposed transformation, we plot the following function
$$\mcH(x) =\cos (2 x)+\frac{1}{2} \sin (x)+\frac{1}{3} \sin (10 x)$$
as in \cite{M18} and compare this landscape with that of $\mcH_{\epsilon,c}^f$ in Figure \ref{fig:landscape}, where we take $\epsilon \in \{0.25,0.5,0.75,1\}$, $c = -1.5$ and $f(z) = z$. With these choices of parameters, we compute that
$$\mcH^f_{\epsilon,c}(x) = \frac{1}{\epsilon} (\min\{c,\mcH(x)\} - \min \mcH) + \ln\left(1 + \frac{1}{\epsilon}(\mcH(x) - c)_+\right).$$
In view of the above equation, we plot and compare $\frac{1}{\epsilon} \mcH(x)$ and $\mcH^f_{\epsilon,c}(x) + \frac{1}{\epsilon} \mcH_{\textrm{min}}$ in Figure \ref{fig:landscape} on the domain $\mathcal{X} = \{-5 + \frac{k}{1000} 10;~ k = 1,2,\ldots,1000\}$. The shift of $+\frac{1}{\epsilon}\mcH_{\textrm{min}}$ is necessary to make these two landscapes match exactly in the region where $\{x;~\mcH(x) \leq c\}$ so that they are on the same scale. When $x \in (-4,-2)$ in Figure \ref{fig:landscape}, we can see that the gradient of $\mcH_{\epsilon,c}^f$ is smaller than that of $\frac{1}{\epsilon} \mcH$, thus it is easier to climb up the hill in this region (i.e. higher transition rate). From the plot we also note that both $\frac{1}{\epsilon} \mcH$ (the red and solid curve) and $\mcH_{\epsilon,c}^f$ (the blue and dashed curve) share exactly the same set of stationary points. 
\begin{figure}
	\centering
	{\renewcommand{\arraystretch}{0}
		\begin{tabular}{c@{}c}
			\begin{subfigure}[b]{.475\columnwidth}
				\centering
				\includegraphics[width=\columnwidth]{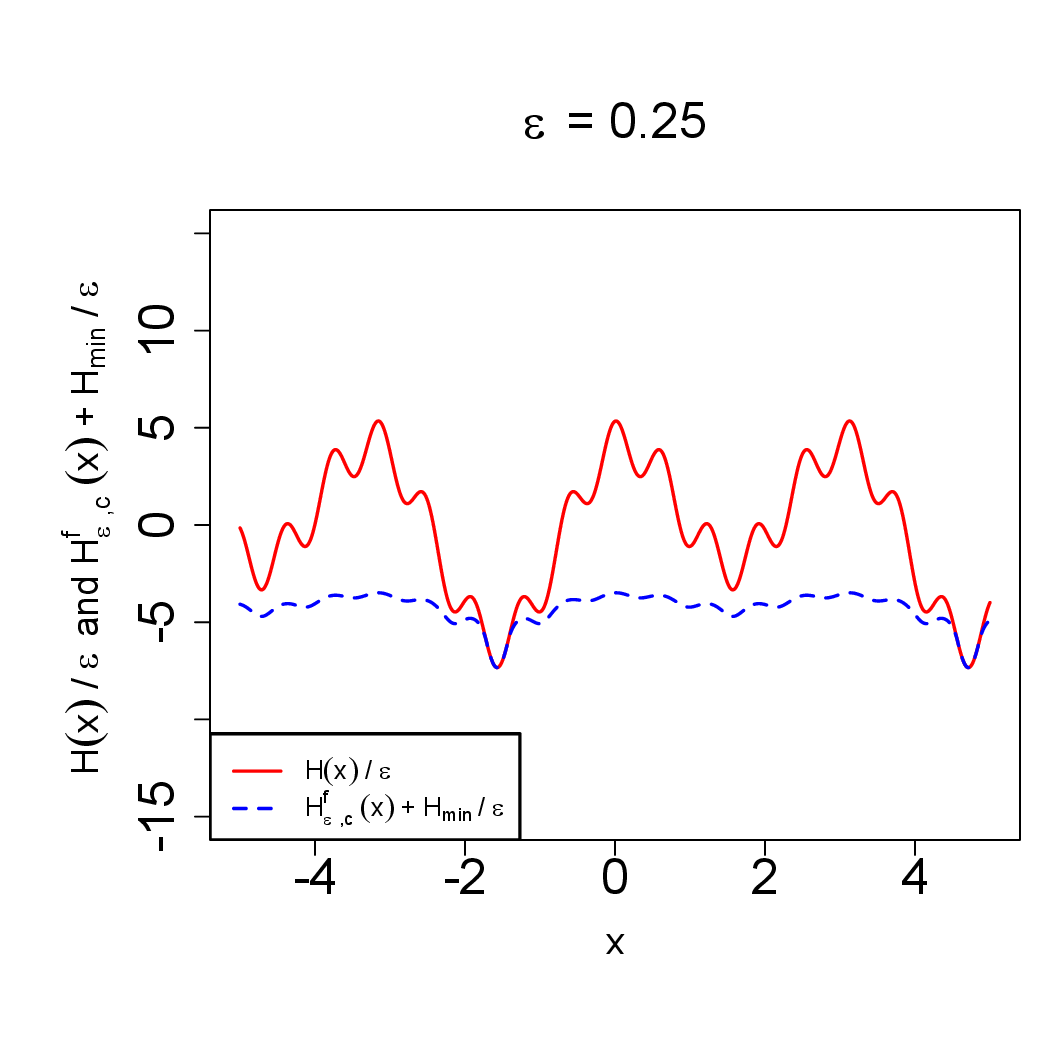}%
				\label{fig:mean and std of net14}
			\end{subfigure}&
			\begin{subfigure}[b]{.475\columnwidth}  
				\centering
				\includegraphics[width=\columnwidth]{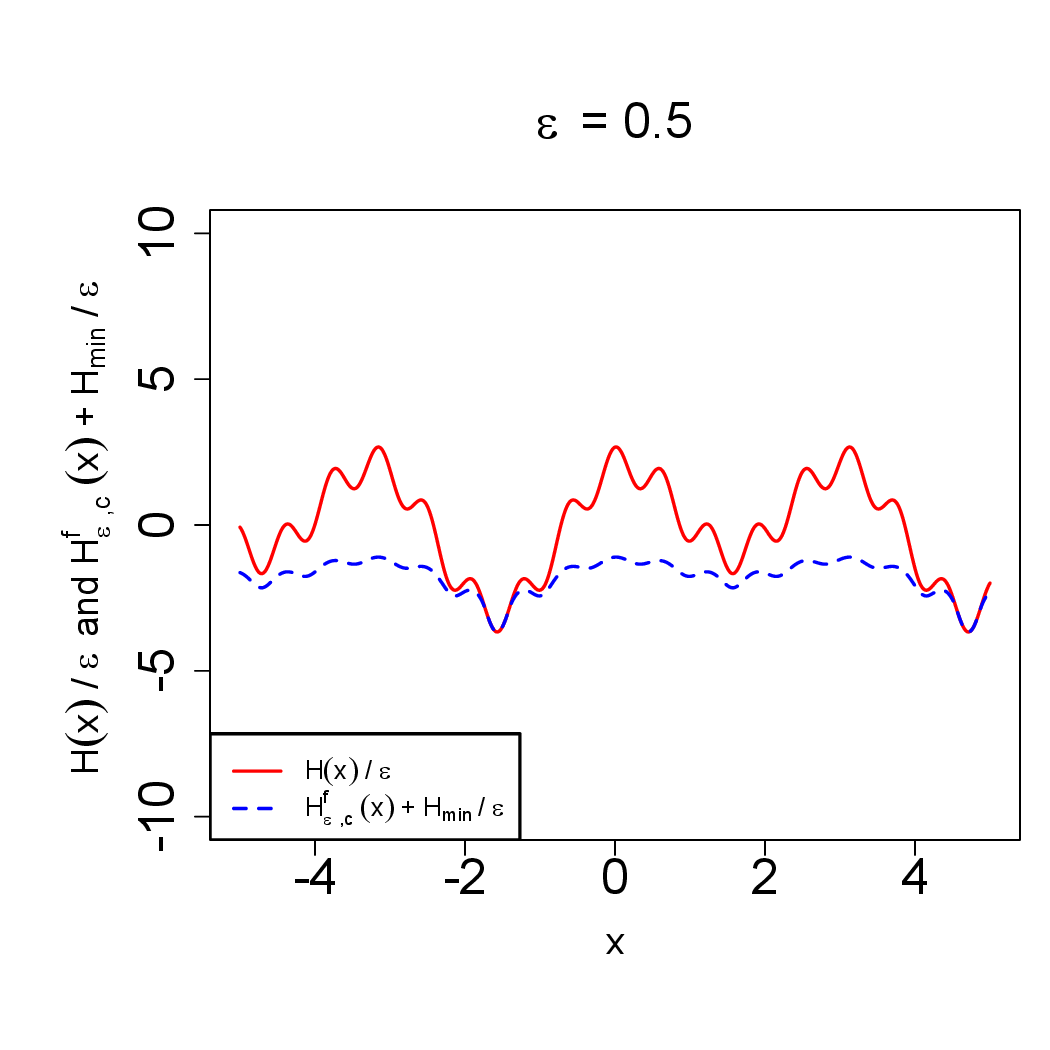}%
				\label{fig:mean and std of net24}
			\end{subfigure}\\
			\begin{subfigure}[t]{.475\columnwidth}   
				\centering 
				\includegraphics[width=\textwidth]{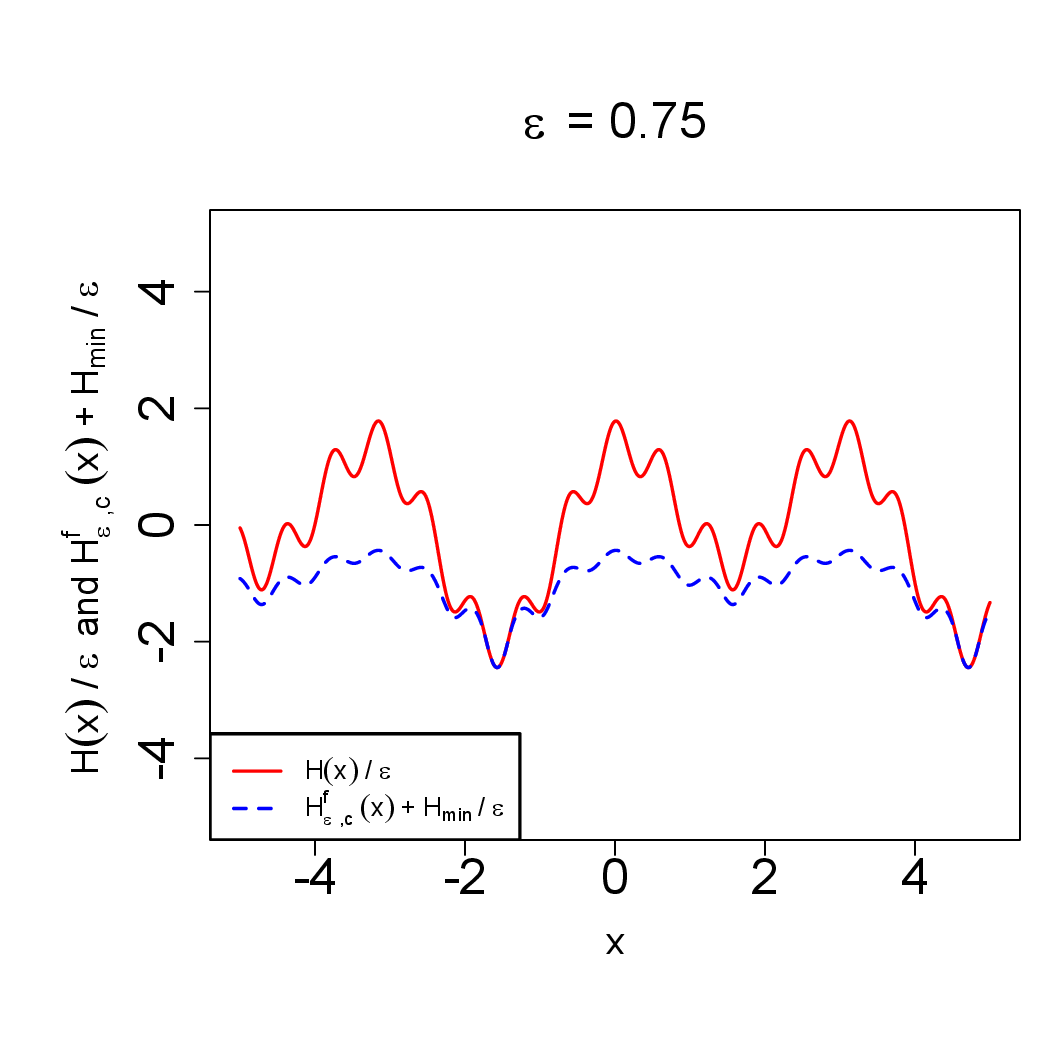}%
				\label{fig:mean and std of net34}
			\end{subfigure}&
			\begin{subfigure}[t]{.475\columnwidth}   
				\centering 
				\includegraphics[width=\columnwidth]{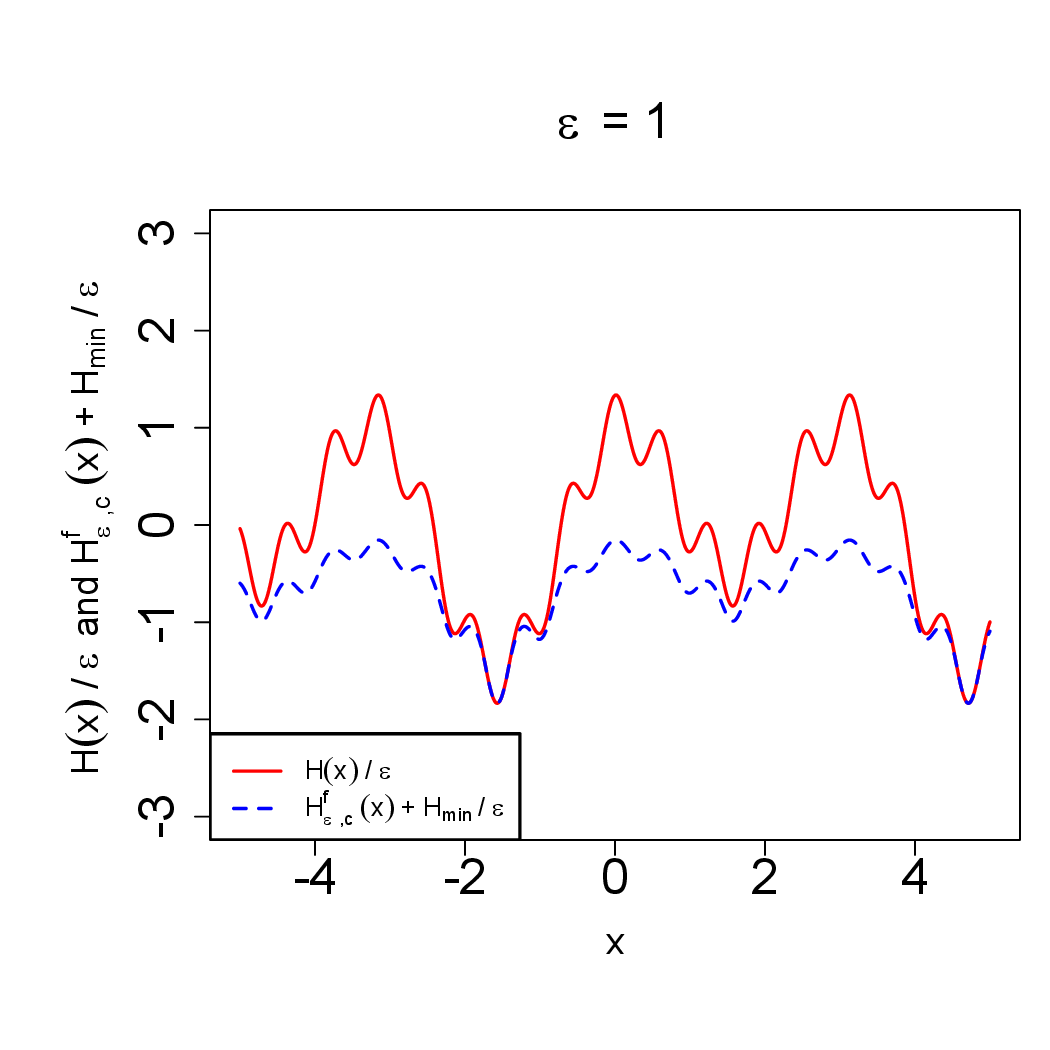}%
				\label{fig:mean and std of net44}
			\end{subfigure}
	\end{tabular}}
	\caption[]
	{\small Landscape of $\frac{1}{\epsilon} \mcH$ and $\mcH^f_{\epsilon,c} + \frac{1}{\epsilon} \mcH_{\textrm{min}}$, where $\mcH(x) =\cos (2 x)+\frac{1}{2} \sin (x)+\frac{1}{3} \sin (10 x)$, $\epsilon \in \{0.25,0.5,0.75,1\}$, $c = -1.5$ and $f(z) = z$.} 
	\label{fig:landscape}
\end{figure}

With these ideas and notations in mind, we are now ready to introduce the Metropolis-Hastings chain with landscape modification:

\begin{definition}[Metropolis-Hastings with landscape modification]\label{def:MHland}
	Let $\mcH$ be the target Hamiltonian function, and $\mcH_{\epsilon,c}^f$ be the landscape-modified function at temperature $\epsilon$ introduced in \eqref{eq:mcHeps}, where $f$ and $c$ satisfy Assumption \ref{assump:main}. The continuized Metropolis-Hastings chain with landscape modification $X^f_{\epsilon,c} = (X^f_{\epsilon,c}(t))_{t \geq 0}$ has target distribution $\pi^f(x) = \pi^f_{\epsilon,c}(x) \propto e^{-\mcH_{\epsilon,c}^f(x)}\mu(x)$, proposal chain $Q$, and its infinitesimal generator $M^f = (M^f(x,y))_{x,y \in \mathcal{X}}$ is given by
	$$M^{f}(x,y) = M_{\epsilon,c}^f(Q,\pi^f)(x,y) := \begin{cases} Q(x,y) e^{-(\mcH_{\epsilon,c}^f(y)-\mcH_{\epsilon,c}^f(x))_+}\, &\mbox{if } x \neq y; \\
		- \sum_{z: z \neq x} M^f(x,z), & \mbox{if } x = y. \end{cases}$$
	Note that when $f = 0$, the above dynamics reduces to the classical Metropolis-Hastings $X^0$.
\end{definition}

We now fix a few notations and recall some important concepts and results in the Markov chains literature. We endow the Hilbert space $\ell^2(\pi^f_{\epsilon,c})$ with the usual inner product weighted by the invariant measure $\pi^f_{\epsilon,c}$: for $g_1,g_2 \in \ell^2 (\pi^f_{\epsilon,c})$,
$$\langle g_1,g_2 \rangle_{\pi^f_{\epsilon,c}} := \sum_{x \in \mathcal{X}} g_1(x) g_2(x) \pi^f_{\epsilon,c}(x),$$
and for $p > 1$ we denote the $\ell^p$ norm by $\norm{\cdot}_{\ell^p(\pi^f_{\epsilon,c})}$.
We write $\lambda_2(-M^f_{\epsilon,c})$ to be the spectral gap of $M^f_{\epsilon,c}$, that is,
\begin{align}\label{eq:spectralgap}
	\lambda_2(-M^f_{\epsilon,c}) := \inf_{l \in \ell^2(\pi^f_{\epsilon,c}): \pi^f_{\epsilon,c}(l) = 0} \dfrac{\langle -M^f_{\epsilon,c}l,l \rangle_{\pi^f_{\epsilon,c}}}{\langle l,l \rangle_{\pi^f_{\epsilon,c}}}.
\end{align} 
Analogously, we write $\lambda_2(-M^0_{\epsilon})$ to denote the spectral gap of $M^0_{\epsilon}$.

In the upcoming sections, we shall investigate and compare the total variation mixing time between the Metropolis-Hastings chains with and without landscape modification. For any probability measure $\nu_1,\nu_2$ with support on $\mathcal{X}$, the total variation distance between $\nu_1$ and $\nu_2$ is 
$$||\nu_1 - \nu_2||_{TV} := \sup_{A \subset \mathcal{X}} |\nu_1(A) - \nu_2(A)| = \dfrac{1}{2} \sum_{x \in \mathcal{X}} |\nu_1(x) - \nu_2(x)|.$$
The worst-case total variation mixing time is defined to be
$$t_{mix}(M^f_{\epsilon,c},1/4) := \inf\left\{t;~\sup_x ||P_t^{f}(x,\cdot) - \pi^f_{\epsilon,c}||_{TV} < 1/4\right\},$$
where $(P_t^{f} = e^{M^f_{\epsilon,c}t})_{t \geq 0}$ is the transition semigroup of $X^f_{\epsilon,c}$. 

Besides the mixing time, we will also be interested in various hitting times of the Metropolis-Hastings chain. These variables naturally appear when we discuss metastability results in the Curie-Weiss model in Section \ref{sec:CW} or in discrete simulated annealing in Section \ref{sec:discreteSA}. For any $A \subset \mathcal{X}$, we denote $\tau^f_{A} := \inf\{t \geq 0; X^f_{\epsilon,c}(t) \in A\}$ and the usual convention that $\inf \emptyset = \infty$ applies. Similarly we define $\tau^0_A$ to be the first hitting time of the set $A$ for the Metropolis-Hastings chain $X^0$. When $A = \{x\}$, we shall simply write $\tau^f_x = \tau^f_{\{x\}}$ (resp.~ $\tau^0_x = \tau^0_{\{x\}}$). Also, we write $\mathbb{E}_x(\cdot)$ (resp.~ $\mathbb{P}_x(\cdot)$) to denote the mathematical expectation (resp.~ probability) of the Markov chain with initial state at $x \in \mathcal{X}$. 

One crucial notion that quantifies possible benefits of landscape modification is the concept of critical height, which in a broad sense measures the difficulty of the landscape. To this end, we recall this classical notion that orginates from the simulated annealing and metastability literature. A path from $x$ to $y$ is any sequence of points starting from $x_0 = x, x_1, x_2,\ldots, x_n =y$ such that $Q(x_{i-1},x_i) > 0$ for $i = 1,2,\ldots,n$. For any $x \neq y$, such path exists as the proposal chain $Q$ is irreducible. We write $\chi^{x,y}$ to be the set of paths from $x$ to $y$, and elements of $\chi^{x,y}$ are denoted by $\gamma = (\gamma_i)_{i=0}^n$. Given a target function $\mathcal{U}$ defined on $\mathcal{X}$, the highest value of $\mathcal{U}$ along a path $\gamma \in \chi^{x,y}$, known as the elevation, is defined to be 
$$\mathrm{Elev}(\mathcal{U},\gamma) := \max\{\mathcal{U}(\gamma_i);~\gamma_i \in \gamma\},$$ 
and the lowest possible highest elevation along path(s) from $x$ to $y$ is 
\begin{align}\label{eq:G}
	G(x,y) = G(\mathcal{U},x,y) := \min\{\mathrm{Elev}(\mathcal{U},\gamma);~\gamma \in \chi^{x,y}\}.
\end{align}
The associated critical height of $\mathcal{U}$ is then defined to be
\begin{align*}
	L(\mathcal{U}) &:= \max_{x,y \in \mathcal{X}}\{G(x,y) - \mathcal{U}(x) - \mathcal{U}(y) \} + \min \mathcal{U}.
\end{align*} 
Another related notion is the clipped critical height $c^*$, which is defined to be
\begin{align*}
	c^*(\mathcal{U},c) := \max_{x,y \in \mathcal{X}} \{(G(\mathcal{U},x,y) \wedge c) - (\mathcal{U}(x) \wedge c) - (\mathcal{U}(y) \wedge c) \} + \min \mathcal{U}.
\end{align*}
One can understand $c^*$ as if we are optimizing with respect to the function $\mathcal{U} \wedge c$. In Section \ref{sec:CW}, we shall consider $\mathcal{U}$ to be the free energy with and without landscape modification that arises in the Curie-Weiss model, while we take $\mathcal{U}$ to be either $\mathcal{H}$ or $\mathcal{H}^f_{\epsilon,c}$ in Section \ref{sec:discreteSA} when we investigate an improved simulated annealing algorithm. We refer readers to Figure \ref{fig:c*} in Section \ref{sec:discreteSA} where we offer a visual illustration of the notion of critical height.

To simulate $X^{f}_{\epsilon,c}$ practically, we would need to evaluate the acceptance-rejection probability, which amounts to the following integration:
\begin{align}\label{eq:arint}
	\exp\left(-(\mcH_{\epsilon,c}^f(y)-\mcH_{\epsilon,c}^f(x))_+\right) &= \begin{cases} 1, &\mbox{if } \mcH(y) \leq \mcH(x); \\
		\exp\left(-\frac{1}{\epsilon}(\mcH(y) - \mcH(x))\right), & \mbox{if } c\geq \mcH(y) > \mcH(x); \\
		\exp\left(-\frac{1}{\epsilon}(c - \mcH(x)) - \int_{c}^{\mcH(y)} \frac{1}{f(u-c) + \epsilon}\,du\right), & \mbox{if } \mcH(y) > c \geq \mcH(x); \\
		\exp\left(-\int_{\mcH(x)}^{\mcH(y)} \frac{1}{f(u-c) + \epsilon}\,du\right), & \mbox{if } \mcH(y) > \mcH(x) > c. 
	\end{cases}
\end{align}

In the following three subsections, we evaluate the above integrals \eqref{eq:arint} where we choose $f$ to be a linear, quadratic, or cubic function, respectively.

\subsection{Linear $f$: Metropolis-Hastings with logarithmic Hamiltonian and Catoni's energy transformation algorithm}\label{subsec:linearf}

In this subsection, we specialize into $f(u) = u$. It turns out we can understand the landscape modification as if the Hamiltonian is on a logarithmic scale whenever $\mcH(x) > c$.

For $x,y \in \{\mcH(y) > \mcH(x) \geq c\}$, since 
$$\int_{\mcH(x)}^{\mcH(y)} \frac{1}{u-c + \epsilon}\,du = \ln \left(\dfrac{\mcH(y)-c+\epsilon}{\mcH(x)-c+\epsilon}\right),$$
putting the expression back into \eqref{eq:arint} gives
\begin{align*}
	\exp\left(-(\mcH_{\epsilon,c}^f(y)-\mcH_{\epsilon,c}^f(x))_+\right) &= \begin{cases} 1, &\mbox{if } \mcH(y) \leq \mcH(x); \\
		\exp\left(-\frac{1}{\epsilon}(\mcH(y) - \mcH(x))\right), & \mbox{if } c\geq \mcH(y) > \mcH(x); \\
		\exp\left(-\frac{1}{\epsilon}(c - \mcH(x))\right) \dfrac{\epsilon}{\mcH(y)-c+\epsilon}, & \mbox{if } \mcH(y) > c \geq \mcH(x); \\
		\dfrac{\mcH(x)-c+\epsilon}{\mcH(y)-c+\epsilon}, & \mbox{if } \mcH(y) > \mcH(x) > c. 
	\end{cases}
\end{align*}

This resulting dynamics $M^f$ coincides with the energy transformation method introduced by \cite{Catoni98, Catoni96} on $\{\mcH(y) > \mcH(x) > c\}$, which is based on logarithmic Hamiltonian. We refer readers to Section \ref{subsec:connection} for a more detailed account on the connection between landscape modification and Catoni's energy transformation.

\subsection{Quadratic $f$: Metropolis-Hastings with $\arctan$ Hamiltonian}\label{subsec:quadf}

In this subsection, we take $f(u) = u^2$. In this case, the effect of landscape modification gives an inverse-tangent-transformed Hamiltonian whenever $\mcH(x) > c$.

For $x,y \in \{\mcH(y) > \mcH(x) \geq c\}$, using the inverse-tangent difference formula we obtain 
\begin{align*}
	\int_{\mcH(x)}^{\mcH(y)} \frac{1}{(u-c)^2 + \epsilon}\,du &= \sqrt{\frac{1}{\epsilon}}\left(\arctan\left(\sqrt{\frac{1}{\epsilon}}(\mcH(y)-c)\right) - \arctan\left(\sqrt{\frac{1}{\epsilon}}(\mcH(x)-c)\right)\right), \\
	&= \sqrt{\frac{1}{\epsilon}} \arctan\left(\dfrac{\sqrt{\frac{1}{\epsilon}}(\mcH(y)-\mcH(x))}{1+\frac{1}{\epsilon}(\mcH(y)-c)(\mcH(x)-c)}\right),
\end{align*}
and substituting the above expression back into \eqref{eq:arint} gives
{\scriptsize
	\begin{align*}
		\exp\left(-(\mcH_{\epsilon,c}^f(y)-\mcH_{\epsilon,c}^f(x))_+\right) &= \begin{cases} 1, &\mbox{if } \mcH(y) \leq \mcH(x); \\
			\exp\left(-\frac{1}{\epsilon}(\mcH(y) - \mcH(x))\right), & \mbox{if } c\geq \mcH(y) > \mcH(x); \\
			\exp\left(-\frac{1}{\epsilon}(c - \mcH(x)) - \sqrt{\frac{1}{\epsilon}}\arctan\left(\sqrt{\frac{1}{\epsilon}}(\mcH(y)-c)\right)\right), & \mbox{if } \mcH(y) > c \geq \mcH(x); \\
			\exp\left(\sqrt{\frac{1}{\epsilon}}\left(\arctan\left(\sqrt{\frac{1}{\epsilon}}(\mcH(x)-c)\right) - \arctan\left(\sqrt{\frac{1}{\epsilon}}(\mcH(y)-c)\right)\right)\right), & \mbox{if } \mcH(y) > \mcH(x) > c. 
		\end{cases}
\end{align*}}

\subsection{Square root $f$: Metropolis-Hastings with sum of square root and logarithmic Hamiltonian}\label{subsec:squarerf}

In the final example, we let $f(z) = \sqrt{z}$ for $z \geq 0$. 
For $x,y \in \{\mcH(y) > \mcH(x) \geq c\}$, consider the integral
$$\int_{\mcH(x)}^{\mcH(y)} \frac{1}{\sqrt{u-c} + \epsilon}\,du = 2\left(\left(\sqrt{\mcH(y)-c} - \sqrt{\mcH(x)-c}\right) - \epsilon \ln \left(\dfrac{\sqrt{\mcH(y)-c} + \epsilon}{\sqrt{\mcH(x)-c} + \epsilon}\right)\right),$$
putting the expression back into \eqref{eq:arint} gives
{\footnotesize
	\begin{align*}
		\exp\left(-(\mcH_{\epsilon,c}^f(y)-\mcH_{\epsilon,c}^f(x))_+\right) &= \begin{cases} 1, &\mbox{if } \mcH(y) \leq \mcH(x); \\
			\exp\left(-\frac{1}{\epsilon}(\mcH(y) - \mcH(x))\right), & \mbox{if } c\geq \mcH(y) > \mcH(x); \\
			\exp\left(-\frac{1}{\epsilon}(c - \mcH(x)) - 2\sqrt{\mcH(y)-c}\right) \left(\dfrac{\sqrt{\mcH(y)-c} + \epsilon}{ \epsilon}\right)^{2\epsilon}, & \mbox{if } \mcH(y) > c \geq \mcH(x); \\
			\exp\left(2\sqrt{\mcH(x)-c} - 2\sqrt{\mcH(y)-c}\right) \left(\dfrac{\sqrt{\mcH(y)-c} + \epsilon}{\sqrt{\mcH(x)-c} + \epsilon}\right)^{2\epsilon}, & \mbox{if } \mcH(y) > \mcH(x) > c. 
		\end{cases}
\end{align*}}

\begin{rk}[Use of landscape modification for sampling]
	This paper focuses on investigating the acceleration effect of landscape modification in stochastic optimization and simulated annealing. Nonetheless, the technique of landscape modification can also be applied to sampling from multimodal distributions. In \cite{JM21}, we analyze the use of landscape modification for sampling. We now briefly describe the setting therein. Suppose that we are interested in sampling from a multimodal distribution that we denote by $\nu(x) \propto e^{-\mcH(x)}$. Applying the idea of landscape modification, we then construct a Metropolis-Hastings chain with the following transformed Hamiltonian function
	\begin{align}\label{eq:mcHalpha}
		\mcH^f_{1,c,\alpha}(x) = \int_{\mcH_{\textrm{min}}}^{\mathcal{H}(x)} \dfrac{1}{\alpha f((u-c)_+) + 1}\,du
	\end{align}
	so that its stationary distribution is given by $\nu^f_{1,c,\alpha}(x) \propto e^{-\mcH^f_{1,c,\alpha}(x)}$. We note that the parameter $\alpha \geq 0$ is introduced in \eqref{eq:mcHalpha}, which controls the bias between the distribution $\nu$ and its landscape-modified counterpart $\nu^f_{1,c,\alpha}$. If we anneal this parameter by sending $\alpha_t \to 0$ as $t \to \infty$, then $\nu^f_{1,c,\alpha_t}$ converges weakly to the target distribution $\nu$. As a result, we construct a non-homogeneous Metropolis-Hastings chain that converges to $\nu$ in the long run while enjoying the benefits of landscape modification.
\end{rk}

\subsection{A Metropolised Ehrenfest urn with landscape modification}\label{subsec:ehrenfest}

In this section, we discuss a \newline Metropolised Ehrenfest urn model and our exposition follows closely as that in \cite[Section $5.1.2$]{DM94}. Let us first briefly fix the setting. We consider the state space $\mathcal{X} = \{0,1,\ldots,d\}$ with $d \in \mathbb{N}$ and take a linear Hamiltonian $\mathcal{H}(x) = x$, where $\mcH_{\textrm{min}} = 0$. The proposal birth-death chain has generator $Q$ given by $Q(x,x+1) = 1 - x/d$, $Q(x,x-1) = x/d$, $Q(x,x) = -1$ and zero otherwise, and we note that the stationary measure of $Q$ is $\mu(x) \propto 2^{-d} {d \choose x}$. With these choices, the classical Metropolised dynamics is $M^0_{\epsilon}(x,x+1) = Q(x,x+1) e^{-\frac{1}{\epsilon}}$, $M^0_{\epsilon}(x,x-1) = Q(x,x-1)$ and $\pi^0(x) \propto e^{-\frac{1}{\epsilon} x} \mu(x)$. It is shown in \cite[equation $(5.1.1)$]{DM94} that
\begin{align}\label{eq:lambda2up}
	\lambda_2(-M^0_{\epsilon}) \leq \lambda_2(-Q) \dfrac{1}{\sum_{x=0}^d e^{-\frac{1}{\epsilon} x} \mu(x)} = \dfrac{d}{2} \dfrac{2^d}{(1+e^{-\frac{1}{\epsilon}})^d}.
\end{align}
At a fixed temperature $\epsilon$, we note that the upper bound in \eqref{eq:lambda2up} is exponential in $d$.

Now, we consider the landscape modified Metropolis-Hastings with $f(z) = z$ and $c = 1$. With these parameters, we compute
$$\mathcal{H}(x) = \int_0^x \dfrac{1}{(u-1)_+ + \epsilon} \, du = \frac{1}{\epsilon} + \ln\left(\dfrac{(x-1)_+ + \epsilon}{\epsilon}\right).$$
Using \cite[equation $(5.1.1)$]{DM94} leads to
\begin{align}\label{eq:lambda2upland}
	\lambda_2(-M^f_{\epsilon,c=1}) \leq \lambda_2(-Q) \dfrac{1}{\sum_{x=0}^d e^{-\mathcal{H}(x)} \mu(x)} = \dfrac{d}{2} \dfrac{2^d e^{\frac{1}{\epsilon}}}{\frac{\epsilon}{\lfloor d/2 \rfloor - 1 + \epsilon} {d \choose {\lfloor d/2 \rfloor}}} \sim d^3 \frac{1}{\epsilon} e^{\frac{1}{\epsilon}},
\end{align}
where we use the Stirling's formula that gives for large enough $d$, 
\begin{align*}
	{d \choose {\lfloor d/2 \rfloor}} &\sim \dfrac{2^d}{\sqrt{\pi \lfloor d/2 \rfloor}}.
\end{align*}
As a result, the upper bound in \eqref{eq:lambda2upland} gives a polynomial dependency on $d$, at the tradeoff of an exponential dependency on $\frac{1}{\epsilon}$. In retrospect this result is not surprising, since we are working with a logarithmic Hamiltonian and hence the asymptotics is in the polynomial of $d$ instead of $2^d$. This section serves as a warm-up example before we discuss the more complicated Curie-Weiss model in Section \ref{sec:CW}.

\subsection{Tuning strategies of $f$ and $c$}\label{subsec:tuning}

In this section, we discuss tuning strategies of $f$ and the threshold parameter $c$ in the context of using the Metropolis-Hastings chain with landscape modification for stochastic optimization.

First, the function $f$ controls how the landscape is transformed above the threshold parameter $c$. For example, taking a linear $f(z) = z$ gives a logarithmic transformation while taking a quadratic $f(z) = z^2$ yields an $\arctan$ transformation as shown in Section \ref{subsec:linearf} and Section \ref{subsec:quadf}. In general, we recommend using $f(z) = z^2$ since $\arctan$ transformation is uniformly bounded above by $\pi/2$, which facilitates exploration on the part of landscape above $c$ by giving a higher transition rate compared with the choice of $f(z) = z$. However, we suspect that in numerical investigations and depending on the target Hamiltonian $\mcH$ there are possibilities of using a linear $f$ to yield improved convergence towards the global minimum.

Second, as we shall see in Section \ref{sec:discreteSA}, the threshold parameter $c$ controls the clipped critical height $c^*$, and ideally it should be set as close to $\mcH_{\textrm{min}}$ as possible. This is possible if we have information about the value $\mcH_{\textrm{min}}$, which is the case for some statistical physics and theoretical computer science models \cite{NZ19,Z18}. In general however, we may not have access to the value $\mcH_{\textrm{min}}$, and one general method is to tune the threshold parameter adaptively by setting the value at time $t$ to be the running minimum up to time $t$ generated by the chain. However, the resulting process becomes non-Markovian because of the adaptive tuning. Another adaptive tuning strategy is to set the value of $c$ at time $t$ to be $c_t = \mcH(y_t) - d$, where $y_t$ is the proposed state at time $t$ generated by the proposal chain and $d \geq 0$ is a fixed number. In this way, the transition rate of the landscape modified MH chain is always greater than or equal to the MH chain without landscape modification. We shall numerically investigate this tuning strategy and report positive results in Section \ref{subsec:numerical}.

As far as this paper is concerned, we assume a fixed $c$ in all of the main theoretical results. We shall postpone to future work a systematic numerical study of investigating various choices of $f$ and tuning strategies of $c$ on benchmark functions, as well as a theoretical analysis of adaptively tuning $c$ by the running minimum generated by the algorithm in Metropolis-Hastings chain with landscape modification.

\subsection{Connections between landscape modification and other acceleration techniques}\label{subsec:connection}

In this section, we outline similarities and differences in idea between Metropolis-Hastings (MH) with landscape modification and other common acceleration techniques in the literature for MH and simulated annealing.

\subsubsection{Catoni's energy transformation algorithm.}

Let $\alpha_1 \geq 0,\alpha_2 \geq 0,\alpha_3 > - \mcH_{\textrm{min}}$ be three parameters. In \cite{Catoni98, Catoni96}, the author introduces the energy transformation algorithm by transforming the Hamiltonian $\mcH$ to
$$F_{\alpha_1,\alpha_2,\alpha_3}(x) := \alpha_1 \mcH(x) + \alpha_2 \log \left(\mcH(x) + \alpha_3 \right).$$
Recall that in Section \ref{subsec:linearf}, MH with landscape modification can be considered as a\textbf{ state-dependent} version of energy transformation if we take $f(z) = z$, $\alpha_3 = -c + \epsilon$ and $\alpha_1$, $\alpha_2$ are chosen in a state-dependent manner:
$$\alpha_1(x) = \mathbf{1}_{\{\mcH(x) \leq c\}}, \quad \alpha_2(x) = \mathbf{1}_{\{\mcH(x) > c\}}.$$

Note that MH with landscape modification can give rise to other kinds of energy transformation by different choices of $f$, see for example the quadratic case or the square root case in Section \ref{subsec:quadf} and \ref{subsec:squarerf} respectively. The idea of mapping or transforming the function from $\mcH$ to $F(\mcH)$ with $F$ being strictly increasing and concave can be dated back to R. Azencott.

\subsubsection{Preconditioning of the Hamiltonian.}

Landscape modification can be understood as a \textbf{state-dependent} preconditioning of the Hamiltonian $\mcH$. Recall that in \eqref{eq:arint} we compute the acceptance-rejection probability in MH by
\begin{align*}
	\exp\left(-(\mcH_{\epsilon,c}^f(y)-\mcH_{\epsilon,c}^f(x))_+\right) &= \begin{cases} 1, &\mbox{if } \mcH(y) \leq \mcH(x); \\
		\exp\left(-\frac{1}{\epsilon}(\mcH(y) - \mcH(x))\right), & \mbox{if } c\geq \mcH(y) > \mcH(x); \\
		\exp\left(-\frac{1}{\epsilon}(c - \mcH(x)) - \int_{c}^{\mcH(y)} \frac{1}{f(u-c) + \epsilon}\,du\right), & \mbox{if } \mcH(y) > c \geq \mcH(x); \\
		\exp\left(-\int_{\mcH(x)}^{\mcH(y)} \frac{1}{f(u-c) + \epsilon}\,du\right), & \mbox{if } \mcH(y) > \mcH(x) > c. 
	\end{cases}
\end{align*}
On the set $\{c\geq \mcH(y) > \mcH(x)\}$, the acceptance-rejection probability is the same as the original MH to allow for exploitation, while on the set $\{\mcH(y) > c \geq \mcH(x)\}$ and $\{\mcH(y) > \mcH(x) > c\}$, the acceptance-rejection probability is higher than that of the original MH to encourage exploration of the landscape. Therefore, there is a higher transition rate of moving to other states when the algorithm is above the threshold $c$.

\subsubsection{Importance sampling.}

In importance sampling, the target distribution is altered for possible benefits and speedups such as variance reduction. In landscape modification, the target distribution in MH is altered from the original Gibbs distribution $\pi^0(x) \propto e^{-\frac{1}{\epsilon} \mcH(x)}$ to $\pi^f_{\epsilon,c}(x) \propto e^{-\mcH_{\epsilon,c}^f(x)}$, while the set of stationary points is preserved in the sense that $\mcH$ and $\mcH^f_{\epsilon,c}$ share the same set of stationary points. In importance sampling however, the set of stationary points need not be preserved between the altered Hamiltonian and the original Hamiltonian.

\subsubsection{Quantum annealing.}

In quantum annealing \cite{WWZ16}, given a target Hamiltonian $\mcH$ and an initial Hamiltonian $\mcH_{init}$ that is usually easy to optimize, we optimize a time-dependent function $Q_t$ defined by, for $t \in [0,T]$,
$$Q_t(x) := A(t) \mcH_{\textrm{init}}(x) + B(t) \mcH(x),$$
where $A(t)$ and $B(t)$ are smooth annealing schedules that satisfy $A(T) = B(0) = 0$ and $T$ is the total annealing time. We also choose $A(t)$ to be decreasing and $B(t)$ to be increasing on the interval $[0,T]$.

In simulated annealing with landscape modification, we also optimize a time-dependent function $\mcH^f_{\epsilon_t,c}$ which shares the same set of stationary points as the target $\mcH$. In quantum annealing, $\mcH_{\textrm{init}}$ and $\mcH$ do not necessarily share the same set of stationary points. We mention the work \cite{DelMiclo99,Lowe96,FG93} for simulated annealing with time-dependent energy function.

\section{The Curie-Weiss model with landscape modification}\label{sec:CW}

In this section, we demonstrate the power of landscape modification by revisiting the Curie-Weiss (CW) model. With appropriate choice of parameters, the landscape of the CW free energy is modified and the local minimum is eliminated while the global minimum is preserved on the transformed function. As a result, landscape modification convexifies the free energy from a double-well to a single-well as a function of the magnetization.

Let us first recall the setting of the CW model of a ferromagnet with external field $h \in \mathbb{R}$ and fix a few notations. We shall follow the setting as in \cite[Chapter $13, 14$]{BH15}. Let $\mathcal{X} = \{-1,1\}^N$ be the set of possible configurations of the CW model with $N \in \mathbb{N}$. The CW Hamiltonian is given by, for $\sigma = (\sigma_i)_{i=1}^N \in \mathcal{X}$,
$$H_{N}(\sigma) := -\dfrac{1}{2N} \sum_{i,j=1}^{N} \sigma_i \sigma_j - h \sum_{i=1}^N \sigma_i = - \dfrac{N}{2} m_N(\sigma)^2 - h N m_N(\sigma) =: N E\left(m_N(\sigma)\right),$$
where $m_N(\sigma) = (1/N)\sum_{i=1}^N \sigma_i$ is the empirical magnetization. Consider the continuized Glauber dynamics by picking a node uniformly at random and flipping the sign of the selected spin, while targeting the Gibbs distribution with the CW Hamiltonian $H_N$ at temperature $\epsilon$. The resulting Metropolis dynamics is given by
$$P_{\epsilon,N}(\sigma,\sigma^{\prime}) = \begin{cases} (1/N) e^{-\frac{1}{\epsilon}(H_N(\sigma^{\prime})-H_N(\sigma))_+}, &\mbox{if } \norm{\sigma-\sigma^{\prime}}_1 = 2; \\
	- \sum_{\eta: \eta \neq \sigma} P_{\epsilon,N}(\sigma,\eta), & \mbox{if } \sigma = \sigma^{\prime}; \\
	0, & \mbox{otherwise,} \end{cases}$$
where $\norm{\cdot}_1$ is the $l^1$ norm on $\mathcal{X}$. 

The dynamics of the empirical magnetization $(m^0(t))_{t \geq 0}$ can be described by lumping the Glauber dynamics to give
\begin{align*}
	M^0_N(m,m^{\prime}) = M^0_{\epsilon,N}(m,m^{\prime}) = \begin{cases} \frac{1-m}{2} e^{-\frac{1}{\epsilon} N(E(m^{\prime})-E(m))_+}, &\mbox{if } m^{\prime} = m + 2N^{-1}; \\
		\frac{1+m}{2} e^{-\frac{1}{\epsilon} N(E(m^{\prime})-E(m))_+}, &\mbox{if } m^{\prime} = m - 2N^{-1}; \\
		- \sum_{m^{\prime}: m^{\prime} \neq m} M^0_{\epsilon,N}(m,m^{\prime}), & \mbox{if } m = m^{\prime}; \\
		0, &\mbox{otherwise.} \end{cases}
\end{align*}
on the state space $\Gamma_N := \{-1,-1+2N^{-1},\ldots,1-2N^{-1},1\}$ with the image Gibbs distribution 
$$\pi^0_{N}(m) = \pi^0_{\epsilon,N}(m) \propto e^{-\frac{1}{\epsilon} N E(m)} {N \choose \frac{1+m}{2}N} 2^{-N}, \quad m \in \Gamma_N,$$
as the stationary distribution. Note that the dependency on $\epsilon$ is suppressed in the notations of $M^0_N$ and $\pi^0_N$. Denote by 
\begin{align*}
	I_N(m) &:= - \dfrac{1}{N} \ln \left({N \choose \frac{1+m}{2}N} 2^{-N}\right), \\
	I(m) &:= \dfrac{1}{2} (1+m) \ln(1+m) + \dfrac{1}{2} (1-m) \ln(1-m), \\
	g_{\epsilon,N}(m) &:= E(m) + \epsilon I_N(m),
\end{align*}
where $I(m)$ is the Cram\'{e}r rate function for coin tossing. As a result, the image Gibbs distribution can be written as 
\begin{align*}
	\pi^0_{\epsilon,N}(m) &\propto e^{-\frac{1}{\epsilon} N g_{\epsilon,N}(m)}, \\
	\lim_{N \to \infty} I_N(m) &= I(m), \\
	g_{\epsilon}(m) &:= \lim_{N \to \infty} g_{\epsilon,N}(m) = E(m) + \epsilon I(m).
\end{align*}
$g_{\epsilon}$ is called the free energy of the CW model. The stationary point(s) of $g_{\epsilon}$ satisfies the classical mean-field equation
\begin{align}\label{eq:mf}
	m = \tanh\left(\frac{1}{\epsilon}(m+h)\right).
\end{align}

To seek the ground state(s) of the free energy, we consider modifying the landscape of the CW Hamiltonian from $H_N$ to
\begin{align}
	E^f_{\epsilon,c}(m) &:= \int_{d}^{E(m)} \dfrac{1}{f((u-c)_+) + \epsilon} \, du, \\
	H^f_{\epsilon,c,N}(\sigma) &:= N \cdot E^f_{\epsilon,c}(m(\sigma)),
\end{align}
where $d \in \mathbb{R}$ can be chosen arbitrarily since we are only interested in the difference of $E^f_{\epsilon,c}$. The infinitesimal generator of the magnetization $(m^f(t))_{t \geq 0}$ is 
\begin{align*}
	M^f_{\epsilon,c,N}(m,m^{\prime}) = \begin{cases} \frac{1-m}{2} e^{- N(E^f_{\epsilon,c}(m^{\prime})-E^f_{\epsilon,c}(m))_+}, &\mbox{if } m^{\prime} = m + 2N^{-1}; \\
		\frac{1+m}{2} e^{- N(E^f_{\epsilon,c}(m^{\prime})-E^f_{\epsilon,c}(m))_+}, &\mbox{if } m^{\prime} = m - 2N^{-1}; \\
		- \sum_{m^{\prime}: m^{\prime} \neq m} M^f_{\epsilon,c,N}(m,m^{\prime}), & \mbox{if } m = m^{\prime}; \\
		0, &\mbox{otherwise,} \end{cases}
\end{align*}
with stationary distribution
$$\pi^f_{\epsilon,c,N}(m) \propto e^{- N E^f_{\epsilon,c,N}(m)} {N \choose \frac{1+m}{2}N} 2^{-N} = e^{- N g^f_{\epsilon,c,N}(m)}, \quad m \in \Gamma_N,$$
where
\begin{align*}
	g^f_{\epsilon,c,N}(m) &:= E^f_{\epsilon,c}(m) +  I_N(m).
\end{align*}
By taking the limit $N \to \infty$, the free energy in the landscape-modified CW model is therefore
$$g^f_{\epsilon,c}(m) := E^f_{\epsilon,c}(m) +  I(m).$$
Setting the derivative of $g^f_{\epsilon,c}$ equals to zero gives the \textbf{landscape-modified mean-field equation}:
\begin{align}\label{eq:mfland}
	m &= \tanh\left(\dfrac{m+h}{f((E(m)-c)_+) + \epsilon}\right).
\end{align}
Observe that if we take $f = 0$, then \eqref{eq:mfland} reduces to the classical mean-field equation in \eqref{eq:mf}.

\subsection{Main results}\label{subsec:CWfix}

Without loss of generality, assume the external magnetic field is $h < 0$. In the subcritical regime where $\frac{1}{\epsilon} > 1$, it is known that there are two local minima of $g_{\epsilon}$. We denote the global minimum of $g_{\epsilon}$ by $m_{-}^* < 0$ and the other local minimum by $m_+^* > 0$, where $|m_-^*| > m_+^*$, and let $z^*$ be the saddle point between $m_-^*$ and $m_+^*$. We also write $m_-^*(N)$ (resp.~ $m_+^*(N)$) to be the closest point in Euclidean distance on $\Gamma_N$ to $m_-^*$ (resp.~ $m_+^*$).

\begin{theorem}[Landscape modification in the subcritical regime]\label{thm:landmodCW}
	Suppose $\frac{1}{\epsilon} > 1$, $h < 0$ and $f,c$ are chosen as in Assumption \ref{assump:main}.
	\begin{enumerate}
		\item\label{it:convexify}[Convexification of the free energy $g_{\epsilon}$ and subexponential mean crossover time] If we choose $c \in [E(m_-^*), E(m_+^*))$, $c < h^2/2$, $-h - \sqrt{h^2 - 2c} \leq z^*$ and for $m \in [-h - \sqrt{h^2 - 2c},-h + \sqrt{h^2 - 2c}]$,
		$$m > \tanh\left(\dfrac{m+h}{f((E(m)-c)_+) + \epsilon}\right),$$
		then $m_-^*$ is the only stationary point of the modified free energy $g^f_{\epsilon,c}$, which is a global minimum. \textcolor{black}{Consequently, we have subexponential mean crossover time on the modified landscape
		\begin{align}\label{eq:convexify}
			\lim_{N \to \infty} \dfrac{1}{N} \log \E_{m_+^*(N)} \left(\tau^f_{m_-^*(N)}\right) = 0,
		\end{align}
		while the mean crossover time on the original landscape is exponential in $N, 1/\epsilon$ and the original critical height $g_{\epsilon}(z^*) - g_{\epsilon}(m_+^*)$ with
		\begin{align*}
			\lim_{N \to \infty} \dfrac{1}{N} \log \E_{m_+^*(N)} \left(\tau^0_{m_-^*(N)}\right) = \frac{1}{\epsilon} (g_{\epsilon}(z^*) - g_{\epsilon}(m_+^*)).
		\end{align*}}
		
		\item\label{it:clip} If we choose $c \in [E(m_+^*), E(z^*)]$ and assume in addition that $f$ is twice differentiable and satisfies $f^{\prime}(0) = f^{\prime \prime}(0) = 0$, then there exists $\mathbf{z}^* = \arg \max_{m_-^* \leq m \leq m_+^*} g^f_{\epsilon,c}(m)$ and as $N \to \infty$,
		\begin{align*}
			\E_{m_+^*(N)} \left(\tau^f_{m_-^*(N)}\right) &= \exp\left(N (g_{\epsilon,c}^f(\mathbf{z}^{*}) - g_{\epsilon,c}^f(m_+^*))\right) \\
			&\quad \times \frac{2}{1-|\mathbf{z}^{*}|} \sqrt{\frac{1-\mathbf{z}^{*2}}{1-m_{+}^{* 2}}} \frac{2 \pi N / 4}{ \sqrt{\left(-(g_{\epsilon,c}^{f})^{\prime \prime}\left(\mathbf{z}^{*}\right)\right) (g_{\epsilon,c}^{f})^{\prime \prime}\left(m_{+}^{*}\right)}} (1+o(1)).
		\end{align*}
		Consequently,
		\begin{align*}
			\lim_{N \to \infty} \dfrac{1}{N} \log \E_{m_+^*(N)} \left(\tau^f_{m_-^*(N)}\right) &= g_{\epsilon,c}^f(\mathbf{z}^{*}) - g_{\epsilon,c}^f(m_+^*) \\
			&= \frac{1}{\epsilon} (c - E(m_+^*)) + \int_c^{E(\mathbf{z}^{*})} \dfrac{1}{f((u-c)_+) + \epsilon} \,du+ \left(I(\mathbf{z}^{*}) - I(m_+^*)\right) \\
			&\leq \frac{1}{\epsilon} (g_{\epsilon}(z^*) - g_{\epsilon}(m_+^*)) = \lim_{N \to \infty} \dfrac{1}{N} \log \E_{m_+^*(N)} \left(\tau^0_{m_-^*(N)}\right).
		\end{align*}
	\end{enumerate}
\end{theorem}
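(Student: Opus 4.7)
The plan is to treat parts (1) and (2) in turn. The common thread is first to analyze the critical points of the modified free energy $g^f_{\epsilon,c}(m) = E^f_{\epsilon,c}(m) + I(m)$ via the landscape-modified mean-field equation \eqref{eq:mfland}, and then to translate the resulting potential-landscape picture into mean hitting time asymptotics for the magnetization chain $(m^f(t))_{t \geq 0}$ using standard Freidlin--Wentzell / Eyring--Kramers tools for reversible birth-death chains.

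For part (1), observe that $E(m) = -m^2/2 - hm$ is a downward parabola peaking at $m = -h$ with value $h^2/2$. Under $c < h^2/2$, the set $\{m : E(m) > c\}$ is exactly the open interval $(-h - \sqrt{h^2-2c},\, -h + \sqrt{h^2-2c})$; outside this interval $f((E(m)-c)_+) = 0$, and \eqref{eq:mfland} reduces to the classical mean-field equation \eqref{eq:mf}. The hypothesis $c \geq E(m_-^*)$ places $m_-^*$ in this complement, so it persists as a fixed point; the hypotheses $c < E(m_+^*)$ and $-h - \sqrt{h^2-2c} \leq z^*$ place both $z^*$ and $m_+^*$ in the modified region, where the strict $\tanh$ inequality in the hypothesis rules out any fixed point. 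Hence $m_-^*$ is the unique stationary point of $g^f_{\epsilon,c}$ and therefore its global minimum. Having convexified the landscape in this sense, the subexponential estimate \eqref{eq:convexify} follows from the standard large-deviations formula for mean hitting times of a reversible birth-death chain on $\Gamma_N$: the exponential rate is the optimal elevation gap along paths between $m_+^*(N)$ and $m_-^*(N)$, which is now zero.

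For part (2), the narrower range $c \in [E(m_+^*), E(z^*)]$ puts $m_+^*$ at or just outside the boundary of the modified region while keeping $z^*$ inside, so a local maximum $\mathbf{z}^* \in (m_-^*, m_+^*)$ of $g^f_{\epsilon,c}$ still exists. The extra smoothness $f^{\prime}(0) = f^{\prime\prime}(0) = 0$ guarantees that $E^f_{\epsilon,c} \in C^2$ across the gluing set $\{E = c\}$, which is precisely what is needed for a nondegenerate quadratic expansion of $g^f_{\epsilon,c}$ at both $\mathbf{z}^*$ and $m_+^*$. I would then apply a discrete Eyring--Kramers theorem for birth-death chains via the exact hitting-time formula
\[
\E_{m_+^*(N)} \tau^f_{m_-^*(N)} \;=\; \sum_{k} \frac{1}{\pi^f_{\epsilon,c,N}(k)\, M^f_{\epsilon,c,N}(k, k + 2N^{-1})} \sum_{j \leq k} \pi^f_{\epsilon,c,N}(j),
\]
where the sums are indexed over appropriate intervals of $\Gamma_N$. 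Laplace's method concentrates the outer sum around $k \approx \mathbf{z}^*$ and the inner sum around $j \approx m_+^*$, yielding Gaussian factors proportional to $1/\sqrt{-g_{\epsilon,c}^{f \prime \prime}(\mathbf{z}^*)}$ and $1/\sqrt{g_{\epsilon,c}^{f \prime \prime}(m_+^*)}$, while Stirling applied to the binomial factor in $\pi^f_{\epsilon,c,N}$ produces the remaining prefactor $\tfrac{2}{1-|\mathbf{z}^*|}\sqrt{(1-\mathbf{z}^{*2})/(1-m_+^{*2})}$. The exponential rate $g^f_{\epsilon,c}(\mathbf{z}^*) - g^f_{\epsilon,c}(m_+^*)$ is then evaluated explicitly by splitting the defining integral of $E^f_{\epsilon,c}$ at the level $c$, and the final inequality against $\beta(g_\epsilon(z^*) - g_\epsilon(m_+^*))$ uses the pointwise bound $1/(f((u-c)_+)+\epsilon) \leq 1/\epsilon = \beta$ together with a monotonicity comparison of $\mathbf{z}^*$ with $z^*$.

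The main obstacle is the Eyring--Kramers prefactor in part (2): because $g^f_{\epsilon,c}$ is only $C^2$ (not analytic) across $\{E = c\}$, the Laplace expansion must be justified directly, and particular care is needed if $\mathbf{z}^*$ lies close to that boundary, where higher-order Taylor remainders cannot be controlled by smooth estimates. Part (1) and the identification of the exponential barrier height in part (2) are, by contrast, essentially fixed-point analysis combined with direct integration.
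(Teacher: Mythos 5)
Your proposal is correct and follows essentially the same route as the paper: part (1) is the same fixed-point analysis of the modified mean-field equation on $\{E(m)\geq c\}$ versus its complement, followed by the observation that the convexified landscape has zero critical height, and part (2) is the same Bovier--den Hollander-style argument via the exact birth-death hitting-time formula, Stirling, and Laplace's method, with the smoothness hypothesis on $f$ used exactly where you flag it (the paper upgrades to a third-order Taylor expansion, which is why it assumes $f''(0)=0$ so that $g^f_{\epsilon,c}$ is $C^3$ across $\{E=c\}$). The only minor divergence is that for the upper bound in \eqref{eq:convexify} the paper routes through L\"{o}we's spectral-gap bound combined with the random target lemma rather than the hitting-time formula directly, but both reduce to the vanishing critical height.
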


Before we present the proof, we interpret the results in Theorem \ref{thm:landmodCW} intuitively: in item \eqref{it:convexify}, on the one hand we would like to choose $c$ small enough such that the mapping $$m \mapsto \tanh\left(\dfrac{m+h}{f((E(m)-c)_+) + \epsilon}\right)$$ is flattened and only intersects with the straight line $m \mapsto m$ at the global minimum $m_-^*$. In this way the landscape of $g^f_{\epsilon,c}$ is transformed from a double-well to a single-well, while the location of the global minimum at $m_-^*$ is preserved as that in the original landscape $g_{\epsilon}$. This is illustrated in Figure \ref{fig:CWfreeenergy} and Figure \ref{fig:CWmf}. On the other hand, we cannot choose $c$ to be too small if we are interested in seeking the ground state of $g_{\epsilon}$, since otherwise if $c < E(m_-^*)$ then $m_-^*$ may no longer be the global minimum in the transformed free energy $g^f_{\epsilon,c}$. \textcolor{black}{This consequently yields a subexponential in $N$ mean crossover time on the modified landscape, while the original mean crossover time is exponential in $N, 1/\epsilon$ and the original critical height $g_{\epsilon}(z^*) - g_{\epsilon}(m_+^*)$}. In Theorem \ref{thm:landmodCW} item \eqref{it:clip}, we choose a larger value of $c$ compared with that in item \eqref{it:convexify}. Although the transformed free energy $g^f_{\epsilon,c}$ is not a convex function, it has a smaller critical height than the original free energy $g_{\epsilon}$. \textcolor{black}{This subsequently gives a reduced exponential dependence on the modified mean crossover time compared with the original mean crossover time.}

The power of landscape modification or energy transformation lies in tuning the parameter $c$ appropriately. One way to tune $c$ is to use the running minimum generated by the algorithm on the original free energy $g_{\epsilon}$. Suppose we start in the well containing the local minimum $m_+^*$, and setting $c$ to be the running minimum eventually gives $c = E(m_+^*)$, and hence Theorem \ref{thm:landmodCW} item \eqref{it:clip} can be applied and the critical height on the modified landscape is reduced.

We illustrate Theorem \ref{thm:landmodCW} with a concrete numerical example in Figure \ref{fig:CWfreeenergy} and Figure \ref{fig:CWmf}, where we take $h = -0.05$ and $f(z) = z$ at temperature $\epsilon = 1/1.5$. We numerically compute that $m_-^* = -0.8863$, $m_+^* = 0.8188$ and $z^* = 0.1524$. As a result we have $E(m_-^*) = -0.4371$, $E(m_+^*) = -0.2943$ and $E(z^*) = -0.004$. In the leftmost plot of Figure \ref{fig:CWfreeenergy} and Figure \ref{fig:CWmf}, we choose $c = -0.4 \in [E(m_-^*), E(m_+^*))$. We numerically check that the conditions in Theorem \ref{thm:landmodCW} item \eqref{it:convexify} are satisfied, and we see that the blue curve and the orange curve share the same location of the global minimum. In the rightmost plot of Figure \ref{fig:CWfreeenergy} and Figure \ref{fig:CWmf}, we choose $c = -0.2 \in [E(m_+^*), E(z^*)]$. We see that the blue curve and the red curve share the same locations of the two local minima, while the critical height is smaller than that in the original landscape $g_{\epsilon}$.

\begin{figure}[h]
	\includegraphics[width=1.1\textwidth,center]{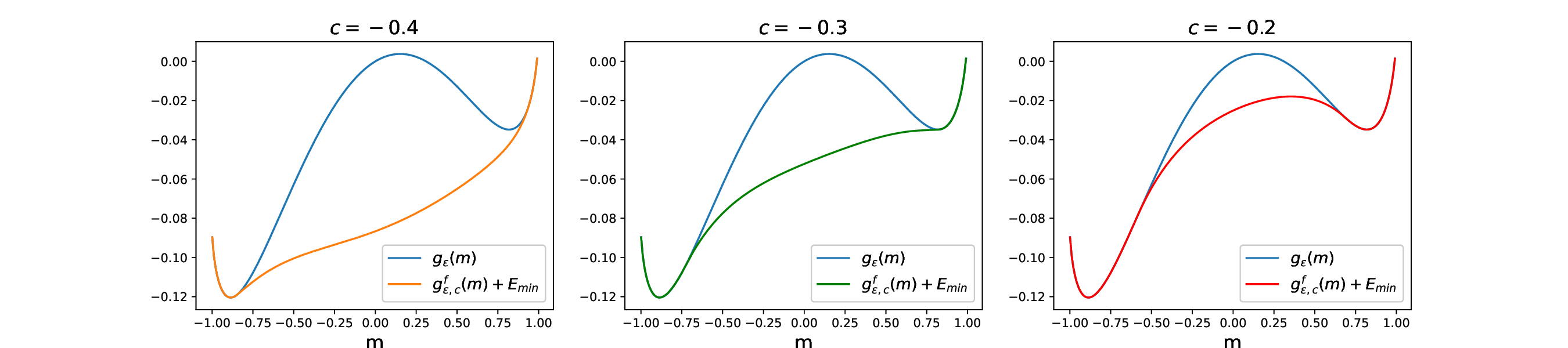} 
	\caption{Plots of the free energy $g_{\epsilon}$ and the modified free energy $g^{f}_{\epsilon,c} + E_{min}$ with $h = -0.05$ and $f(z) = z$ at temperature $\epsilon = 1/1.5$, where $E_{min} = \min_{m \in [-1,1]} E(m)$. We shift the modified free energy by $E_{min}$ so that it is on the same scale as the original free energy $g_{\epsilon}$.}
	\label{fig:CWfreeenergy} 
\end{figure}
\newpage
\begin{figure}[h]
	\centering
	\includegraphics[width=1.1\textwidth,center]{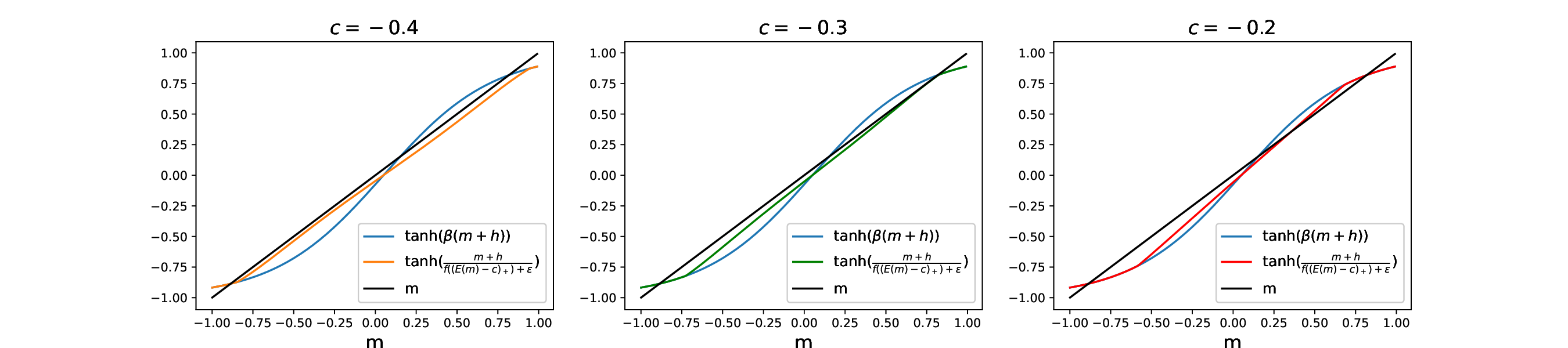} 
	\caption{Plots of the mean-field equation \eqref{eq:mf} and the modified mean-field equation \eqref{eq:mfland} with $h = -0.05$ and $f(z) = z$ at temperature $\epsilon = 1/1.5$.}
	\label{fig:CWmf} 
\end{figure}

\subsection{Proof of Theorem \ref{thm:landmodCW}}

%
Before we give the proof, let us first recall the concept of critical height and the notation $L$ as introduced in Section \ref{sec:MHland}. In this Section, we are interested in the CW model with and without landscape modification we with free energy $g^f_{\epsilon,c,N}$ and $g_{\epsilon,N}$ respectively. As a result we define the analogous concepts of critical heights by inserting a subscript of $N$. This leads us to
\begin{align*}
	H^f_{\epsilon,c,N} &:= L(g^f_{\epsilon,c,N}) = \max_{x,y \in \mathcal{X}}\{G(g^f_{\epsilon,c,N},x,y) - g^f_{\epsilon,c,N}(x) - g^f_{\epsilon,c,N}(y) \} + \min_{x \in \Gamma_N} g^f_{\epsilon,c,N}(x), \\
	H^0_{\epsilon,N} &:= L(g_{\epsilon,N}) = \max_{x,y \in \mathcal{X}}\{G(g_{\epsilon,N},x,y) - g^f_{\epsilon,N}(x) - g_{\epsilon,N}(y) \} + \min_{x \in \Gamma_N} g_{\epsilon,N}(x),
\end{align*}
where we recall that $G$ is introduced in \eqref{eq:G}.

We proceed with the 
\begin{proof}[Proof of Theorem \ref{thm:landmodCW}]
	First, we prove item \eqref{it:convexify}. We observe that $\{E(m) \geq c\} = \{m \in [-h - \sqrt{h^2 - 2c},-h + \sqrt{h^2 - 2c}]\}$. On this interval,
	$$\dfrac{d}{dm} g^f_{\epsilon,c}(m) = \arctanh(m) - \dfrac{m+h}{f((E(m)-c)_+) + \epsilon} > 0,$$
	and hence the modified free energy is strictly increasing on this interval. On the interval $\{m > -h + \sqrt{h^2 - 2c}\}$, $\dfrac{d}{dm} g^f_{\epsilon,c}(m) = \dfrac{d}{dm} g_{\epsilon}(m) > 0$ as the original free energy is strictly increasing. On the interval $\{m < -h - \sqrt{h^2 - 2c}\}$, we also have $\dfrac{d}{dm} g^f_{\epsilon,c}(m) = \dfrac{d}{dm} g_{\epsilon}(m)$. Thus, with these parameter choices, the only stationary point of $g^f_{\epsilon,c}$ is $m^*_-$, which is the global minimum.
	
	Next, we proceed to prove \eqref{eq:convexify}. According to \cite[Theorem $2.1$]{Lowe96}, for $\xi_1(N)$ a polynomial function in $N$, we have
	$$\dfrac{1}{\lambda_2(-M^f_{\epsilon,c,N})} \leq \xi_1(N) e^{N H^f_{\epsilon,c,N}}.$$
	Now, using the random target lemma \cite[Section $4.2$]{AF14} and the above inequality lead to
	\begin{align}\label{eq:limsuphit}
		\pi^f_{\epsilon,c,N}(m^*_-(N))  \E_{m_+^*(N)} \left(\tau^f_{m_-^*(N)}\right) \leq \sum_{y \in \Gamma_N} \pi^f_{\epsilon,c,N}(y)  \E_{m_+^*(N)} \left(\tau^f_{y}\right) \leq (|\Gamma_N|-1) \dfrac{1}{\lambda_2(-M^f_{\epsilon,c,N})} \leq N \xi_1(N) e^{N H^f_{\epsilon,c,N}}. 
	\end{align}
	Now, let $\overline{m}(N) := \arg \min g^f_{\epsilon,c,N}(m)$ and compute
	\begin{align*}
		g^f_{\epsilon,c,N}(m^*_-(N)) - g^f_{\epsilon,c,N}(\overline{m}(N)) &= E^f_{\epsilon,c}(m^*_-(N)) - E^f_{\epsilon,c}(\overline{m}(N)) + I_N(m^*_-(N)) - I_N(\overline{m}(N)) \\
		&= E^f_{\epsilon,c}(m^*_-(N)) - E^f_{\epsilon,c}(\overline{m}(N)) \\
		&\quad + I_N(m^*_-(N)) - I(m^*_-(N)) + I(m^*_-(N)) - I_N(\overline{m}(N))  \\
		&= E^f_{\epsilon,c}(m^*_-(N)) - E^f_{\epsilon,c}(\overline{m}(N)) \\
		&\quad + [1+o(1)] \frac{1}{2 N} \ln \left(\frac{\pi N\left(1-m^*_-(N)^{2}\right)}{2}\right) \\
		&\quad + I(m_-^*(N)) - I(\overline{m}(N)) \\
		&\quad - [1+o(1)] \frac{1}{2 N} \ln \left(\frac{\pi N\left(1-\overline{m}(N)^{2}\right)}{2}\right) \\
		&\rightarrow 0 \quad \text{as} \, N \to \infty,
	\end{align*}
	where we use \cite[equation $(13.2.5)$]{BH15} in the third equality, and $m_-^*(N), \overline{m}(N) \to m_-^*$ as $N \to \infty$. The above computation combined with \eqref{eq:limsuphit} yields
	\begin{align}\label{eq:limsuphit2}
		\limsup_{N \to \infty} \E_{m_+^*(N)} \left(\tau^f_{m_-^*(N)}\right) \leq \lim_{N \to \infty}  \dfrac{1}{N} \log H^f_{\epsilon,c,N} = 0.
	\end{align}
	
	On the other hand, as the magnetization $(m^f(t))_{t \geq 0}$ is a birth-death process, using \cite[equation $(13.2.2)$]{BH15} the mean hitting time can be calculated explicitly as
	\begin{align*}
		\mathbb{E}_{m_{+}^{*}(N)}\left(\tau^f_{m_{-}^{*}(N)}\right) &= \sum_{m, m^{\prime} \in \Gamma_{N}, m \leq m^{\prime} \atop m_{-}^{*}(N)<m \leq m_{+}^{*}(N)} \frac{\pi^f_{\epsilon,c, N}\left(m^{\prime}\right)}{\pi^f_{\epsilon,c, N}(m)} \frac{1}{M^f_{\epsilon,c, N}\left(m, m-2 N^{-1}\right)} \\
		&\geq \dfrac{1}{M^f_{\epsilon,c, N}\left(m_{+}^{*}(N), m_{+}^{*}(N)-2 N^{-1}\right)} = \dfrac{2}{1+m_{+}^{*}(N)} e^{ N(E^f_{\epsilon,c}(m_{+}^{*}(N)-2 N^{-1})-E^f_{\epsilon,c}(m))_+} \\
		&\geq \dfrac{2}{1+m_{+}^{*}(N)}.
	\end{align*}
	As a result, as $m_+^*(N) \to m_+^*$ we have
	\begin{align}\label{eq:liminfhit}
		\liminf_{N \to \infty} \dfrac{1}{N} \log \E_{m_+^*(N)} \left(\tau^f_{m_-^*(N)}\right)  \geq 0.
	\end{align}
	Using both \eqref{eq:liminfhit} and \eqref{eq:limsuphit2} gives \eqref{eq:convexify}.
	
	Next, we prove item \eqref{it:clip}, which follow closely with the proof of \cite[Theorem $13.1$]{BH15}. Since we choose $c \in [E(m_+^*), E(z^*)]$, then we have $E(m_-^*) - c < E(m_+^*) - c \leq 0$, and hence the landscape modified mean-field equation \eqref{eq:mfland} has at least two solutions $m_+^*$ and $m_-^*$, which are exactly the same as the original mean-field equation \eqref{eq:mf}. As the landscape modified mean-field equation is continuous in $m$, there exists $\mathbf{z}^* = \arg \max_{m_-^* \leq m \leq m_+^*} g^f_{\epsilon,c}(m)$ which also satisfies \eqref{eq:mfland}. Now, for $m \in \Gamma_N$ we consider
	\begin{align*}
		N \left(E^f_{\epsilon,c}(m - 2N^{-1}) - E^f_{\epsilon,c}(m)\right) &= N \int_{E(m)}^{E(m) + 2N^{-1}(m+h - N^{-1})} \dfrac{1}{f((u-c)_+) + \epsilon} \, du \\
		&\rightarrow \dfrac{2(m+h)}{f((E(m)-c)_+) + \epsilon} \quad \text{as } N \to \infty.
	\end{align*}
	If we take $N \to \infty$ and $m \rightarrow \mathbf{z}^*$, we obtain
	\begin{align*}
		\frac{1}{M^f_{\epsilon,c, N}\left(m, m-2 N^{-1}\right)} \to \dfrac{2}{1+\mathbf{z}^*} \exp \left(\dfrac{2(\mathbf{z}^*+h)_+}{f((E(\mathbf{z}^*)-c)_+) + \epsilon}\right) = \dfrac{2}{1 - |\mathbf{z}^*|},
	\end{align*}
	since if $\mathbf{z}^* > 0$, then $\mathbf{z}^* + h > 0$ and satisfies the mean-field equation \eqref{eq:mfland}. Using the mean hitting time formula again leads to, for any $\delta > 0$,
	\begin{align}\label{eq:meanhit}
		\mathbb{E}_{m_{+}^{*}(N)}\left(\tau^f_{m_{-}^{*}(N)}\right) &= \sum_{m, m^{\prime} \in \Gamma_{N}, m \leq m^{\prime} \atop m_{-}^{*}(N)<m \leq m_{+}^{*}(N)} \frac{\pi^f_{\epsilon,c, N}\left(m^{\prime}\right)}{\pi^f_{\epsilon,c, N}(m)} \frac{1}{M^f_{\epsilon,c, N}\left(m, m-2 N^{-1}\right)} \nonumber \\
		&= e^{N\left[g^f_{\epsilon,c,N}\left(\mathbf{z}^{*}\right) - g^f_{\epsilon,c,N}\left(m_{+}^{*}\right)\right]} \frac{2}{1 - |\mathbf{z}^*|}[1+o(1)] \nonumber \\
		&\quad \times \sum_{m, m^{\prime} \in \Gamma_{N} \atop\left|m-\mathbf{z}^{*}\right|<\delta,\left|m^{\prime}-m_{+}^{*}\right|<\delta} \mathrm{e}^{N\left[g^f_{\epsilon,c, N}(m)-g^f_{\epsilon,c, N}\left(\mathbf{z}^{*}\right)\right]- N\left[g^f_{\epsilon,c, N}\left(m^{\prime}\right)-g^f_{\epsilon,c, N}\left(m_{+}^{*}\right)\right]}.
	\end{align}
	Using the Stirling's formula and the same argument as in \cite[equation $(13.2.7)$]{BH15} yield
	\begin{align*}
		e^{N\left[g^f_{\epsilon,c,N}\left(\mathbf{z}^{*}\right) - g^f_{\epsilon,c,N}\left(m_{+}^{*}\right)\right]} &= (1+o(1)) e^{N\left[g^f_{\epsilon,c}\left(\mathbf{z}^{*}\right) - g^f_{\epsilon,c}\left(m_{+}^{*}\right)\right]} \sqrt{\dfrac{1-\mathbf{z}^{*2}}{1-m_+^{*2}}}
	\end{align*}
	followed by substitution to \eqref{eq:meanhit} gives
	\begin{align}\label{eq:meanhit2}
		\mathbb{E}_{m_{+}^{*}(N)}\left(\tau^f_{m_{-}^{*}(N)}\right) &= e^{N\left[g^f_{\epsilon,c}\left(\mathbf{z}^{*}\right) - g^f_{\epsilon,c}\left(m_{+}^{*}\right)\right]} \frac{2}{1 - |\mathbf{z}^*|} \sqrt{\dfrac{1-\mathbf{z}^{*2}}{1-m_+^{*2}}}[1+o(1)] \nonumber \\
		&\quad \times \sum_{m, m^{\prime} \in \Gamma_{N} \atop\left|m-\mathbf{z}^{*}\right|<\delta,\left|m^{\prime}-m_{+}^{*}\right|<\delta} \sqrt{\dfrac{1-m^{2}}{1-\mathbf{z}^{*2}}} \sqrt{\dfrac{1-m^{*2}_+}{1-m^{\prime 2}}} \mathrm{e}^{N\left[g^f_{\epsilon,c}(m)-g^f_{\epsilon,c}\left(\mathbf{z}^{*}\right)\right]- N\left[g^f_{\epsilon,c}\left(m^{\prime}\right)-g^f_{\epsilon,c}\left(m_{+}^{*}\right)\right]}.
	\end{align}
	We proceed to use a Laplace method argument to handle the sum in \eqref{eq:meanhit2}. Note that as $f$ is assumed to be twice-differentiable with $f(0) = f^{\prime}(0) = f^{\prime \prime}(0) = 0$, this implies $g^f_{\epsilon,c}$ is three-time differentiable, and applying the third-order Taylor expansion gives
	\begin{align}
		g^f_{\epsilon,c}(m) - g^f_{\epsilon,c}\left(\mathbf{z}^{*}\right) &= \dfrac{(m - \mathbf{z}^{*})^2}{2} (g_{\epsilon,c}^{f})^{\prime \prime}(	\mathbf{z}^*) + \mathcal{O}\left((m - \mathbf{z}^{*})^3\right), \\
		g^f_{\epsilon,c}\left(m^{\prime}\right)-g^f_{\epsilon,c}\left(m_{+}^{*}\right) &= \dfrac{(m^{\prime} - m_{+}^{*})^2}{2} (g_{\epsilon,c}^{f})^{\prime \prime}(	m_{+}^{*}) + \mathcal{O}\left((m^{\prime} - m_{+}^{*})^3\right),
	\end{align}
	where we use $(g_{\epsilon,c}^{f})^{\prime}(\mathbf{z}^*) = (g_{\epsilon,c}^{f})^{\prime}(m_{+}^{*}) = 0$. Now, we observe that the sum in \eqref{eq:meanhit2} is
	\begin{align*}
		&\quad (1+o(1)) \dfrac{N}{4} \int_{\mathbb{R}} \int_{\mathbb{R}} \exp \left[\frac{1}{2} (g_{\epsilon,c}^{f})^{\prime \prime}\left(\mathbf{z}^{*}\right) u^{2}-\frac{1}{2}  (g_{\epsilon,c}^{f})^{\prime \prime}\left(m_{+}^{*}\right) u^{\prime 2}\right]\, du du^{\prime} \\
		&= (1+o(1)) \dfrac{N}{4} \frac{2 \pi}{\sqrt{\left[-(g_{\epsilon,c}^{f})^{\prime \prime}\left(\mathbf{z}^{*}\right)\right] (g_{\epsilon,c}^{f})^{\prime \prime}\left(m_{+}^{*}\right)}},
	\end{align*}
	since $(g_{\epsilon,c}^{f})^{\prime \prime}\left(\mathbf{z}^{*}\right) < 0$ and $(g_{\epsilon,c}^{f})^{\prime \prime}\left(m_{+}^{*}\right) > 0$.
\end{proof}

\subsection{Extension to the random field Curie-Weiss}

In Section \ref{subsec:CWfix}, we discuss the classical CW model with landscape modification under \textit{fixed} magnetic field. In this section, we aim at considering the random field CW model with landscape modification, and discuss related metastability results and its ground state free energy in such setting, with the aim to illustrate that landscape modification can also be applied in the setting of random energy landscape. Let us begin by first recalling the random field CW model. We shall adapt the setting as in \cite{MP98}. Let $(h_i)_{i \in \mathbb{N}}$ be a sequence of i.i.d. random variables with $\mathbb{P}(h_i = 1) = \mathbb{P}(h_i = -1) = 1/2$. We consider the random Hamiltonian function given by, for a fixed $\theta > 0$ and $\sigma \in \{-1,1\}^N$,
$$\mathbf{H}_N(\sigma) = \mathbf{H}_N(\sigma, \omega) := - \dfrac{N}{2} \mathbf{m}_N(\sigma)^2 - \theta \sum_{i=1}^N h_i(\omega) \sigma_i = - \dfrac{N}{2} (\mathbf{m}_N^+(\sigma) + \mathbf{m}_N^-(\sigma))^2 - \theta N (\mathbf{m}_N^+(\sigma) - \mathbf{m}_N^-(\sigma)),$$
where $\mathbf{m}_N(\sigma) := (1/N) \sum_{i=1}^N \sigma_i$, $\mathbf{m}_N^+(\sigma) := (1/N) \sum_{i=1;~ h_i = 1}^N \sigma_i$ and $\mathbf{m}_N^-(\sigma) := (1/N) \sum_{i=1;~ h_i = -1}^N \sigma_i$. In the sequel, we shall suppress the dependency on $\omega$. Denote the Gibbs distribution at temperature $\epsilon$ on $\{-1,1\}^N$ by
$$\bm{\nu}_N(\sigma) \propto \exp\bigg\{-\frac{1}{\epsilon} \mathbf{H}_N(\sigma)\bigg\}.$$
Let $N^+ := |\{i;~h_i = +1\}|$, $N^- := |\{i;~h_i = -1\}|$ and define the random set
$$\mathbf{\Gamma}_N := \left(-\frac{N^{+}}{N},-\frac{N^{+}}{N} + \frac{2}{N}, \ldots, \frac{N^{+}}{N}\right) \times\left(-\frac{N^{-}}{N},-\frac{N^{-}}{N} + \frac{2}{N}, \ldots, \frac{N^{-}}{N}\right).$$
For $\mathbf{m} = (\mathbf{m}^+, \mathbf{m}^-) \in \bm{\Gamma}_N$, with slight abuse of notation we write
$$\mathbf{H}_N(\mathbf{m}) = - \dfrac{N}{2} (\mathbf{m}^+ + \mathbf{m}^-)^2- \theta N (\mathbf{m}^+ - \mathbf{m}^-) =: N \cdot \mathbf{E}(\mathbf{m}),$$
where $\mathbf{m}^+$ and $\mathbf{m}^-$ are the magnetization among the sites $i$ where respectively
$h_i = 1$ and $h_i = -1$. Let $\bm{\pi}_{\epsilon,N}^0$ denote the image Gibbs distribution of $\bm{\nu}_N$ by $\mathbf{\Gamma}_N$, where
\begin{align*}
	\bm{\pi}_{\epsilon,N}^0(\mathbf{m}) &\propto \exp\bigg\{- \frac{1}{\epsilon} N  \mathbf{g}_{\epsilon,N}(\mathbf{m})\bigg\}, \\
	\mathbf{g}_{\epsilon,N}(\mathbf{m}) &:= - \dfrac{1}{2} (\mathbf{m}^+ + \mathbf{m}^-)^2- \theta (\mathbf{m}^+ - \mathbf{m}^-) - \frac{1}{\frac{1}{\epsilon} N} \log \left(\begin{array}{c}
		N^{+} \\
		\frac{N^{+}}{2}+\mathbf{m}^{+} \frac{N}{2}
	\end{array}\right)\left(\begin{array}{c}
		N^{-} \\
		\frac{N^{-}}{2}+\mathbf{m}^{-} \frac{N}{2}
	\end{array}\right).
\end{align*}
As $N \to \infty$, by the strong law of large number $\mathbf{g}_{\epsilon,N}$ converges almost surely to the free energy given by
$$\mathbf{g}_{\epsilon}(\mathbf{m}) := - \dfrac{1}{2} (\mathbf{m}^+ + \mathbf{m}^-)^2- \theta (\mathbf{m}^+ - \mathbf{m}^-) + \frac{1}{2\frac{1}{\epsilon}}\left(I(2\mathbf{m}^+) + I(2\mathbf{m}^-)\right),$$
where $I(m)$ is the Cram\'{e}r rate function as introduced in Section \ref{sec:CW}. The critical points of $\mathbf{g}_{\epsilon}$ satisfy
\begin{align}\label{eq:mfRFCW}
	\mathbf{m}^+ &= \dfrac{1}{2} \tanh\left(\frac{1}{\epsilon}(\mathbf{m}^+ + \mathbf{m}^- + \theta)\right), \\
	\mathbf{m}^- &= \dfrac{1}{2} \tanh\left(\frac{1}{\epsilon}(\mathbf{m}^+ + \mathbf{m}^- - \theta)\right). \label{eq:mfRFCW2}
\end{align}
In this section, we shall only consider the subcritical regime where $\frac{1}{\epsilon} > \cosh^2(\frac{1}{\epsilon} \theta)$. It can be shown (see e.g. \cite{MP98}) that there are exactly three critical points. Let $\mathbf{m}_{*} > 0$ be the unique positive solution to the mean-field equation
$$\mathbf{m}_{*} = \dfrac{1}{2}\left(\tanh\left(\frac{1}{\epsilon}(\mathbf{m}_{*} + \theta)\right) + \tanh\left(\frac{1}{\epsilon}(\mathbf{m}_{*} - \theta)\right)\right).$$ 
The three critical points of $\mathbf{g}_{\epsilon}$ are given by 
\begin{align*}
	\mathbf{m}_{0} &= \left(\frac{1}{2} \tanh \left(\frac{1}{\epsilon} \theta\right),-\frac{1}{2} \tanh \left(\frac{1}{\epsilon} \theta\right)\right), \\
	\mathbf{m}_{1} &= \left(\frac{1}{2} \tanh \left(\frac{1}{\epsilon} \mathbf{m}_{*}+\frac{1}{\epsilon} \theta\right), \frac{1}{2} \tanh \left(\frac{1}{\epsilon} \mathbf{m}_{*}-\frac{1}{\epsilon} \theta\right)\right), \\
	\mathbf{m}_{2} &= \left(\frac{1}{2} \tanh \left(-\frac{1}{\epsilon} \mathbf{m}_{*}+\frac{1}{\epsilon} \theta\right),-\frac{1}{2} \tanh \left(\frac{1}{\epsilon} \mathbf{m}_{*}+\frac{1}{\epsilon} \theta\right)\right),
\end{align*}
where $\mathbf{m}_0$ is the saddle point and $\mathbf{m}_1, \mathbf{m}_2$ are the two global minima. Consider the continuized Glauber dynamics $(\sigma_N(t))_{t \geq 0}$ by picking a node uniformly at random and changing the sign of the selected spin, while targeting the Gibbs distribution $\bm{\nu}_N$ at temperature $\epsilon$. Denote by $\mathbf{m}_N(t) := \mathbf{m}_N(\sigma_N(t))$ be the induced dynamics on the magnetization, and its infiniteismal generator by $\mathbf{M}^0_{\epsilon,N}$. This is proven to be a Markov chain in \cite{MP98}, with stationary measure $\bm{\pi}^0_{\epsilon,N}$.

Now, let us consider the landscape modified Hamiltonian on $\bm{\Gamma}_N$:
\begin{align}
	\mathbf{E}^f_{\epsilon,c}(\mathbf{m}) &:= \int_{d}^{\mathbf{E}(\mathbf{m})} \dfrac{1}{f((u-c)_+) + \epsilon} \, du, \\
	\mathbf{H}^f_{\epsilon,c,N}(\mathbf{m}) &:= N \cdot \mathbf{E}^f_{\epsilon,c}(\mathbf{m}),
\end{align}
where $d \in \mathbb{R}$ can be chosen arbitrarily since we are only interested in the difference of $\mathbf{E}^f_{\epsilon,c}$. The transformed image Gibbs distribution is therefore
\begin{align*}
	\bm{\pi}_{\epsilon,c,N}^f(\mathbf{m}) &\propto \exp\{- N \mathbf{g}^f_{\epsilon,c,N}(\mathbf{m})\}, \\
	\mathbf{g}^f_{\epsilon,c,N}(\mathbf{m}) &:= \mathbf{E}^f_{\epsilon,c}(\mathbf{m}) - \frac{1}{N} \log \left(\begin{array}{c}
		N^{+} \\
		\frac{N^{+}}{2}+\mathbf{m}^{+} \frac{N}{2}
	\end{array}\right)\left(\begin{array}{c}
		N^{-} \\
		\frac{N^{-}}{2}+\mathbf{m}^{-} \frac{N}{2}
	\end{array}\right).
\end{align*}
The strong law of large number yields that as $N \to \infty$, $\mathbf{g}^f_{\epsilon,c,N}$ converges almost surely to the transformed free energy
$$\mathbf{g}^f_{\epsilon,c}(\mathbf{m}) := \mathbf{E}^f_{\epsilon,c}(\mathbf{m}) + \frac{1}{2}\left(I(2\mathbf{m}^+) + I(2\mathbf{m}^-)\right).$$
The critical points of $\mathbf{g}^f_{\epsilon,c}$ satisfy the following \textbf{landscape modified mean-field equations}:
\begin{align}
	\mathbf{m}^+ &= \dfrac{1}{2} \tanh\left(\dfrac{\mathbf{m}^+ + \mathbf{m}^- + \theta}{f((\mathbf{E}(\mathbf{m})-c)_+) + \epsilon}\right), \label{eq:mfRFCWland} \\
	\mathbf{m}^- &= \dfrac{1}{2} \tanh\left(\dfrac{\mathbf{m}^+ + \mathbf{m}^- - \theta}{f((\mathbf{E}(\mathbf{m})-c)_+) + \epsilon}\right). \label{eq:mfRFCWland2}
\end{align}
Note that \eqref{eq:mfRFCWland} and \eqref{eq:mfRFCWland2} reduce to the classical case \eqref{eq:mfRFCW} and \eqref{eq:mfRFCW2} if we take $f = 0$. Consider the continuized Glauber dynamics $(\sigma_N^f(t))_{t \geq 0}$ by picking a node uniformly at random and changing the sign of the selected spin, while targeting the Gibbs distribution with Hamiltonian $\epsilon \mathbf{H}^f_{\epsilon,c,N}$ at temperature $\epsilon$. Denote by $\mathbf{m}_N^f(t) := \mathbf{m}^f_N(\sigma_N^f(t))$ be the induced dynamics on the magnetization, and its infiniteismal generator by $\mathbf{M}^f_{\epsilon,c,N}$, which is a Markov chain with stationary measure $\bm{\pi}^f_{\epsilon,c,N}$.

In the following, we shall consider the case where $c \in [\mathbf{E}(\mathbf{m}_1), \mathbf{E}(\mathbf{m}_0)]$. It can be seen that the two global minima of $\mathbf{g}^f_{\epsilon,c}$ remain to be $\mathbf{m}_1, \mathbf{m}_2$ with this choice of $c$. For any path $\gamma^{\mathbf{m}_1,\mathbf{m}_0}$ connecting $\mathbf{m}_1$ and $\mathbf{m}_0$, we define
\begin{align*}
	\mathbf{m}_3(\gamma^{\mathbf{m}_1,\mathbf{m}_0}) &:= \arg \max \{\mathbf{g}^f_{\epsilon,c}(\gamma_i);~\gamma_i \in \gamma^{\mathbf{m}_1,\mathbf{m}_0}\}, \\
	\Delta \mathbf{g}^f_{\epsilon,c} &:= \min_{\gamma^{\mathbf{m}_1,\mathbf{m}_0}} \mathbf{m}_3(\gamma^{\mathbf{m}_1,\mathbf{m}_0}) - \mathbf{g}^f_{\epsilon,c}(\mathbf{m}_1) = \min_{\gamma^{\mathbf{m}_1,\mathbf{m}_0}} \mathbf{m}_3(\gamma^{\mathbf{m}_1,\mathbf{m}_0}) - \mathbf{g}^f_{\epsilon,c}(\mathbf{m}_0),
\end{align*}
where $\Delta \mathbf{g}^f_{\epsilon,c}$ is the critical height on the modified landscape. We also write $\Delta \mathbf{g}_{\epsilon}$ to denote the critical height on the original landscape. 
Suppose that $\Delta \mathbf{g}_{\epsilon}$ is attained at $\mathbf{m}_4$ so that $\Delta \mathbf{g}_{\epsilon} = \mathbf{g}_{\epsilon}(\mathbf{m}_4) - \mathbf{g}_{\epsilon}(\mathbf{m}_0)$, and we deduce
\begin{align*}
	\Delta \mathbf{g}^f_{\epsilon,c} \leq \mathbf{g}^f_{\epsilon,c}(\mathbf{m}_4) - \mathbf{g}^f_{\epsilon,c}(\mathbf{m}_0) &= \int_{\mathbf{E}(\mathbf{m}_0)}^{\mathbf{E}(\mathbf{m}_4)} \dfrac{1}{f((u-c)_+) + \epsilon} \, du \\
	&\quad + \frac{1}{2}\left(I(2\mathbf{m}_4^+) + I(2\mathbf{m}_4^-)\right) - \frac{1}{2}\left(I(2\mathbf{m}_0^+) + I(2\mathbf{m}_0^-)\right) \\
	&\leq \frac{1}{\epsilon} \left(\mathbf{g}_{\epsilon}(\mathbf{m}_4) - \mathbf{g}_{\epsilon}(\mathbf{m}_0)\right) = \frac{1}{\epsilon} \Delta \mathbf{g}_{\epsilon}.
\end{align*}
In other words, the critical height of the free energy in the modified landscape is bounded above by $\frac{1}{\epsilon}$ times the critical height of the free energy in the original landscape. 

A direct application of \cite[Theorem $2.7$]{MP98} yields the following result on the asymptotics of the spectral gap:
\begin{theorem}[Asymptotics of the spectral gap]
	Suppose $\theta > 0$, $\frac{1}{\epsilon} > \cosh^2\left(\frac{1}{\epsilon} \theta\right)$ are fixed, and $\mathbf{m}_0$ is the saddle point while $\mathbf{m}_1, \mathbf{m}_2$ are the two global minima on the original free energy landscape $\mathbf{g}_{\epsilon}$. For $c \in [\mathbf{E}(\mathbf{m}_1), \mathbf{E}(\mathbf{m}_0)]$, we have, $\mathbb{P}$-almost surely that
	\begin{align*}
		\lim_{N \to \infty} \dfrac{1}{N} \log \lambda_2(-\mathbf{M}^f_{\epsilon,c,N}) &= - \Delta \mathbf{g}^f_{\epsilon,c} \geq - \frac{1}{\epsilon} \Delta \mathbf{g}_{\epsilon} = \lim_{N \to \infty} \dfrac{1}{N} \log \lambda_2(-\mathbf{M}^0_{\epsilon,N}).
	\end{align*}
	In essence, the relaxation time in the mean-field limit of the transformed generator $\mathbf{M}^f_{\epsilon,c,N}$ is asymptotically less than or equal to that of the original generator $\mathbf{M}^0_{\epsilon,N}$.
\end{theorem}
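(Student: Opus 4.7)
The plan is to verify that the hypotheses of \cite[Theorem $2.7$]{MP98} hold for both the modified dynamics $\mathbf{M}^f_{\epsilon,c,N}$ and the original dynamics $\mathbf{M}^0_{\epsilon,N}$, apply that result in both cases, and then glue together the two equalities with the middle inequality that has already been derived immediately above the theorem statement.

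First, I would record that the modified Glauber chain $(\sigma_N^f(t))_{t \geq 0}$ on $\{-1,1\}^N$ is, by construction, a reversible Metropolis chain with respect to $\bm{\pi}_{\epsilon,c,N}^f \propto \exp\{-N \mathbf{g}^f_{\epsilon,c,N}(\mathbf{m})\}$, and its lumping to the magnetization $\mathbf{m}_N^f(t)$ is a nearest-neighbor Markov chain on the two-dimensional lattice $\bm{\Gamma}_N$, exactly of the form treated in \cite{MP98}. In particular, the only ingredients needed to invoke \cite[Theorem $2.7$]{MP98} are: (i) a reversible nearest-neighbor dynamics on $\bm{\Gamma}_N$ with stationary measure of the form $\exp\{-N \mathbf{g}^f_{\epsilon,c,N}\}$ up to combinatorial prefactors whose $\frac{1}{N}\log$ tends to a smooth limit, (ii) the limiting free energy $\mathbf{g}^f_{\epsilon,c}$ has finitely many critical points with exactly two global minima connected by a well-defined saddle-type configuration. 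Ingredient (i) is immediate from the definitions of $\mathbf{M}^f_{\epsilon,c,N}$ and $\bm{\pi}_{\epsilon,c,N}^f$, together with the a.s.\ convergence $\mathbf{g}^f_{\epsilon,c,N} \to \mathbf{g}^f_{\epsilon,c}$ already proven via the SLLN.

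For ingredient (ii) I would argue that since $c \in [\mathbf{E}(\mathbf{m}_1), \mathbf{E}(\mathbf{m}_0)]$, one has $\mathbf{E}(\mathbf{m}_i) - c \leq 0$ for $i = 1, 2$, so $f((\mathbf{E}(\mathbf{m}_i) - c)_+) + \epsilon = \epsilon$ at these points, and the landscape-modified mean-field equations \eqref{eq:mfRFCWland}--\eqref{eq:mfRFCWland2} coincide with the original system \eqref{eq:mfRFCW}--\eqref{eq:mfRFCW2} at $\mathbf{m}_1$ and $\mathbf{m}_2$. Hence $\mathbf{m}_1, \mathbf{m}_2$ remain critical points of $\mathbf{g}^f_{\epsilon,c}$, and because landscape modification is a monotone transformation of $\mathbf{E}$ that is strictly concave above $c$, no new global minima can be created and $\mathbf{m}_1, \mathbf{m}_2$ are still the two global minima. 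The critical height on the modified landscape is then well-defined and equals $\Delta \mathbf{g}^f_{\epsilon,c}$ as given in the text.

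Once applicability is established, \cite[Theorem $2.7$]{MP98} applied to the modified dynamics yields
\begin{align*}
\lim_{N \to \infty} \dfrac{1}{N} \log \lambda_2(-\mathbf{M}^f_{\epsilon,c,N}) = - \Delta \mathbf{g}^f_{\epsilon,c},
\end{align*}
while the same theorem applied to the unmodified dynamics gives the last equality $\lim_N \frac{1}{N} \log \lambda_2(-\mathbf{M}^0_{\epsilon,N}) = -\beta \Delta \mathbf{g}_{\epsilon}$ (the factor $\beta$ appears because the original stationary measure scales like $\exp\{-\beta N \mathbf{g}_{\epsilon,N}\}$ whereas the modified one has already absorbed $\beta$ into $\mathbf{g}^f_{\epsilon,c,N}$). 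The middle inequality $-\Delta \mathbf{g}^f_{\epsilon,c} \geq -\beta \Delta \mathbf{g}_{\epsilon}$ is exactly the bound derived just before the theorem statement, based on the fact that $\int_{\mathbf{E}(\mathbf{m}_0)}^{\mathbf{E}(\mathbf{m}_4)} 1/(f((u-c)_+) + \epsilon)\,du \leq \beta (\mathbf{E}(\mathbf{m}_4) - \mathbf{E}(\mathbf{m}_0))$.

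The main obstacle is the careful verification that \cite[Theorem $2.7$]{MP98} applies verbatim to the modified dynamics: one must check the smoothness and non-degeneracy hypotheses on $\mathbf{g}^f_{\epsilon,c}$ near $\mathbf{m}_1,\mathbf{m}_2$ and near the effective saddle (which may no longer be $\mathbf{m}_0$ itself but some $\mathbf{m}_3(\gamma^{\mathbf{m}_1,\mathbf{m}_0})$). These require the differentiability assumptions on $f$ stated in Assumption \ref{assump:main}, together with the fact that $f^{\prime}(0) = f^{\prime\prime}(0) = 0$ ensures $\mathbf{g}^f_{\epsilon,c}$ is at least $C^2$ across the threshold surface $\{\mathbf{E}(\mathbf{m}) = c\}$, so that the Laplace-type asymptotics underlying MP98's proof go through unchanged.
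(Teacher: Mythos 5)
Your proposal matches the paper's own argument: the paper offers no separate proof beyond asserting that the theorem is a direct application of Theorem 2.7 of Mathieu and Picco (1998) to both the modified lumped dynamics $\mathbf{M}^f_{\epsilon,c,N}$ and the original $\mathbf{M}^0_{\epsilon,N}$, combined with the inequality $\Delta \mathbf{g}^f_{\epsilon,c} \leq \beta \Delta \mathbf{g}_{\epsilon}$ established in the displayed computation immediately preceding the statement. Your extra verification steps (that $\mathbf{E}(\mathbf{m}_1) = \mathbf{E}(\mathbf{m}_2) \leq c$ so both global minima persist, and that the modified free energy is smooth enough across $\{\mathbf{E}(\mathbf{m}) = c\}$ for the Laplace asymptotics) are more explicit than what the paper records, and are consistent with it; the only small caveat is that $f^{\prime\prime}(0)=0$ is not part of Assumption \ref{assump:main} but an additional hypothesis you would need to impose, as the paper does elsewhere in Theorem \ref{thm:landmodCW}.
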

\textcolor{black}{This subsequently gives a reduced exponential dependence of the relaxation time on the modified landscape compared with the relaxation time on the original landscape.}

\section{Discrete simulated annealing with landscape modification}\label{sec:discreteSA}

Unlike previous sections of this paper where the temperature parameter is fixed, in this section we consider the non-homogeneous Metropolis-Hastings with landscape modification where the temperature schedule $(\epsilon_t)_{t \geq 0}$ is time-dependent, non-increasing and goes to zero as $t \to \infty$.

We first recall the concept of critical height as introduced in Section \ref{sec:MHland}. Precisely, we define
\begin{align}
	H^f_{\epsilon,c} &:= L(\mathcal{H}^f_{\epsilon,c}) = \max_{x,y \in \mathcal{X}}\{G(\mathcal{H}^f_{\epsilon,c},x,y) - \mathcal{H}^f_{\epsilon,c}(x) - \mathcal{H}^f_{\epsilon,c}(y) \}, \label{eq:chland}\\
	H^0 &:= L(\mathcal{H}) = \max_{x,y \in \mathcal{X}}\{G(\mcH,x,y) - \mathcal{H}(x) - \mathcal{H}(y)\} + \min \mathcal{H}, \\
	c^* &:= c^*(\mathcal{H},c) = \max_{x,y \in \mathcal{X}} \{(G(\mathcal{H},x,y) \wedge c) - (\mathcal{H}(x) \wedge c) - (\mathcal{H}(y) \wedge c) \} + \min \mathcal{H}, \label{eq:clippedch}
\end{align}
where $H^f_{\epsilon,c}$ is the critical height associated with the modified landscape, $H^0$ is the critical height associated with the original landscape $\mcH$, and $c^*$ is the clipped critical height. We shall see in the main results of this Section below that both $H^f_{\epsilon,c}$ and $c^*$ play a fundamental role in the relaxation time in the low temperature regime and in determining the cooling schedule of an improved simulated annealing algorithm running on the modified landscape. 

\textcolor{black}{As an illustration to calculate and compare these critical heights, we consider a simple one-dimensional landscape with a saddle point at $s$, local (but not global) minimum at $m$ and a single global minimum. At temperature $\epsilon = 1$, the original critical height is attained at $H^0 = \mcH(s) - \mcH(m)$ while the modified critical height is $H^f_{1,c} = \mcH^f_{1,c}(s) - \mcH^f_{1,c}(m) \leq \mcH^0$. In this setting, depending on whether $c$ is above or below $\mcH(m)$, the clipped critcal height $c^*$ is
\begin{align*}
	c^* = \begin{cases}
		0, \quad \textrm{if} \, c \leq \mcH(m)\,, \\
		c - \mcH(m), \quad \textrm{if} \, c > \mcH(m)\,.
	\end{cases}
\end{align*} 
These critical heights are illustrated in Figure \ref{fig:c*}.
}

\begin{figure}
	\centering
	{\renewcommand{\arraystretch}{0}
		\begin{tabular}{c@{}c}
			\begin{subfigure}[b]{.475\columnwidth}
				\centering
				\includegraphics[width=\columnwidth]{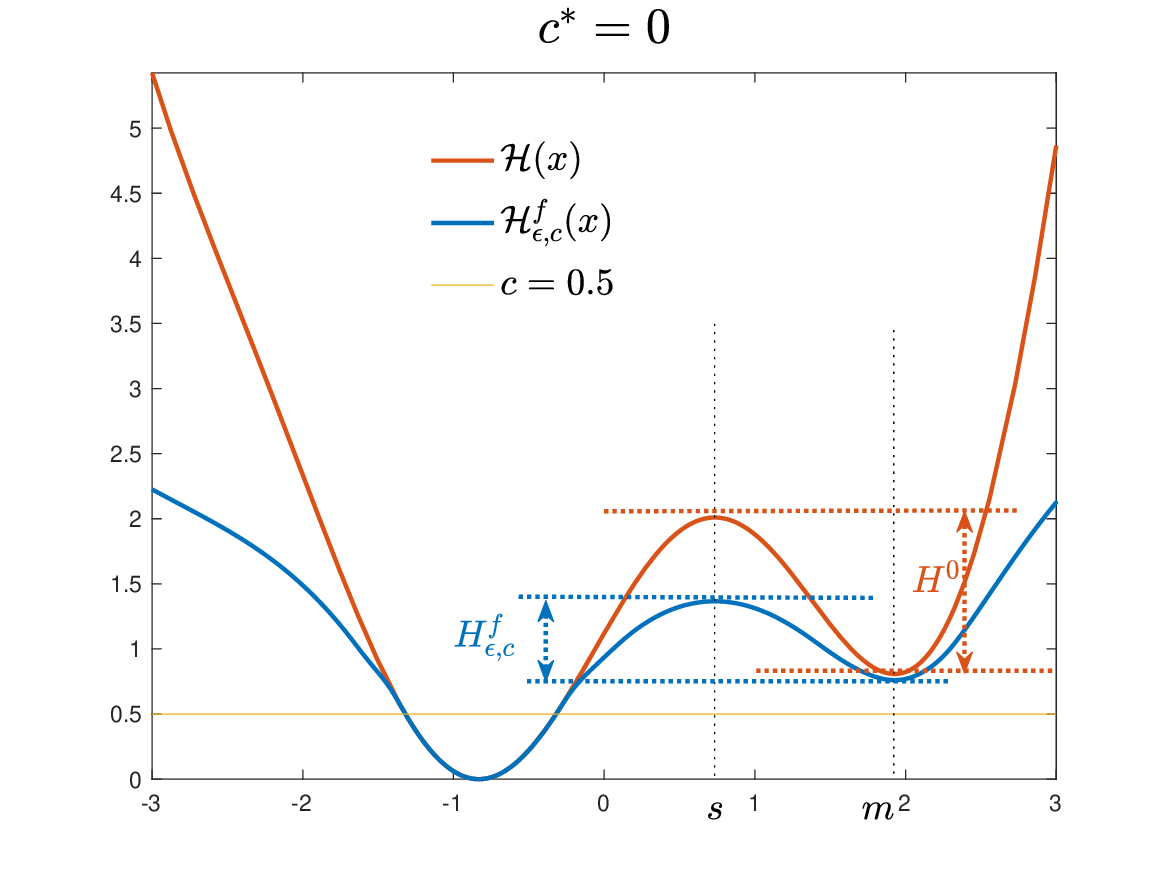}%
			\end{subfigure}&
			\begin{subfigure}[b]{.475\columnwidth}  
				\centering
				\includegraphics[width=\columnwidth]{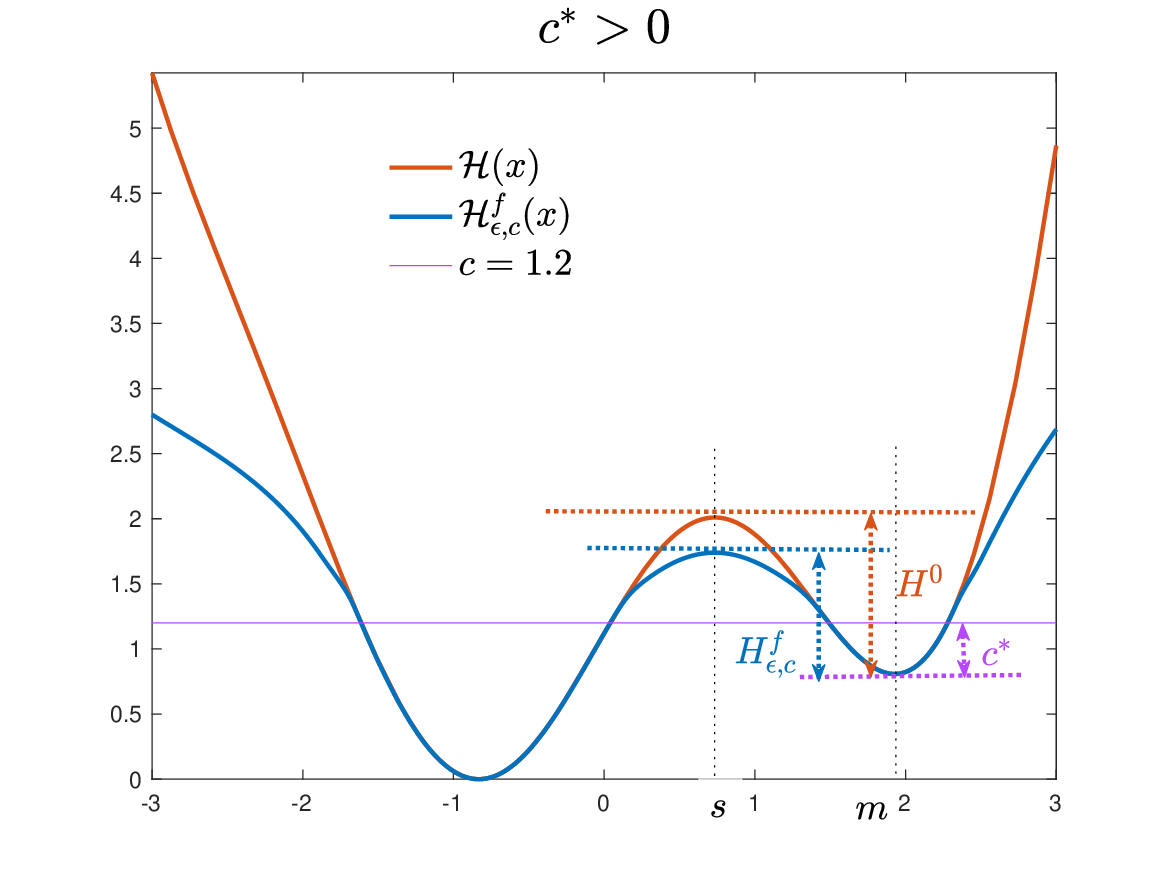}%
			\end{subfigure}
	\end{tabular}}
	\caption{\small Comparing the critical heights $H^0, H^f_{1,c}, c^*$ on a one-dimensional landscape generated by $\mcH$ with a local minimum at $m$ and a global minimum, where we take $\epsilon = 1$, $f(z) = z$ and the proposal chain is of nearest-neighbour type, i.e. going to left or right with probability $1/2$.} 
	\label{fig:c*}
\end{figure}

Our first result gives the asymptotic order of the spectral gap $\lambda_2(-M^f_{\epsilon,c})$ in terms of $c^*$ in the low temperature regime, which will be proven to be essential in obtaining convergence result for simulated annealing:
\begin{theorem}\label{thm:asympspectral}
	Assume that $f$ and $\min \mcH \leq c \leq \max \mcH$ satisfy Assumption \ref{assump:main}, and in addition for all small enough $z > 0$ we have $f(z) \geq z$. There exists positive constants $C_2, C_3, C_4$ that depend on the state space $\mathcal{X}$ and the proposal generator $Q$ but not on the temperature $\epsilon$, and subexponential function 
	$$C_1(\epsilon) := \begin{cases} 
		\frac{1}{C_2} \left(1 + \frac{1}{\epsilon}(\max \mcH - c)\right)\exp\bigg\{\frac{1}{f(\delta)} \left(\max \mcH - \min \mcH\right)\bigg\}, & \mbox{if } c < \max \mcH; \\
		\frac{1}{C_2}, & \mbox{if } c = \max \mcH, 
	\end{cases}$$
	where $\delta := \min_{x;~ \mcH(x) > c} \{\mcH(x) - c\}$, such that
	\begin{align*}
		C_1^{-1}(\epsilon) e^{-\frac{1}{\epsilon} c^*} \leq C_2 e^{-H^f_{\epsilon,c}} \leq \lambda_2(-M^f_{\epsilon,c}) \leq C_3 e^{-H^f_{\epsilon,c}} \leq C_4 e^{-\frac{1}{\epsilon} c^*}, 
	\end{align*}
	where $H^f_{\epsilon,c}$ is introduced in \eqref{eq:chland} and $c^*$ is defined in \eqref{eq:clippedch}. Consequently, this leads to
	\begin{align*}
		\lim_{\epsilon \to 0} \epsilon \log \lambda_2(-M^f_{\epsilon,c}) = - c^*.
	\end{align*}
\end{theorem}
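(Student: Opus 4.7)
\medskip
\noindent\textbf{Proof plan.} My plan is to split the chain of inequalities into the middle pair and the outer pair: the middle pair $C_2 e^{-H^f_{\epsilon,c}} \leq \lambda_2(-M^f_{\epsilon,c}) \leq C_3 e^{-H^f_{\epsilon,c}}$ is a standard spectral gap estimate in terms of critical height, while the outer pair requires comparing $H^f_{\epsilon,c}$ with $\beta c^*$. For the middle I would invoke Miclo-type asymptotics (as in Holley--Stroock and Miclo) applied to the reversible chain $(M^f_{\epsilon,c},\pi^f_{\epsilon,c})$; the resulting constants $C_2,C_3$ depend only on $(\mathcal{X},Q)$ through the underlying proposal structure and not on $\epsilon$ or $\mcH$, since the Hamiltonian enters these bounds only through the critical height $H^f_{\epsilon,c}$.

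The substantive step is to show $H^f_{\epsilon,c} \approx \beta c^*$ up to subexponential error. I would begin with the decomposition
$$\mcH^f_{\epsilon,c}(x) = \beta\bigl((\mcH(x)\wedge c) - \mcH_{min}\bigr) + R(x), \qquad R(x) := \int_c^{\mcH(x) \vee c} \frac{du}{f(u-c)+\epsilon},$$
obtained by splitting the defining integral of $\mcH^f_{\epsilon,c}$ at the threshold $c$. The hypothesis $f(u) \geq u$ for small $u$ upgrades the integrand near $c$ to $1/((u-c)+\epsilon)$, and splitting $R$ further at $c+\delta$ yields, for $\mcH(x) > c$,
$$R(x) \leq \log\bigl(1 + \beta(\max\mcH - c)\bigr) + \frac{\max\mcH - \min\mcH}{f(\delta)},$$
which matches the shape of $\log C_1(\beta)$ in the statement (and vanishes when $c = \max\mcH$). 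Writing $\tilde{\mcH}(x) := \beta(\mcH(x)\wedge c - \mcH_{min})$, monotonicity of $u \mapsto u \wedge c$ commutes with both pathwise $\max_i$ and the outer $\min_\gamma$, giving
$$\min_{\gamma\in\chi^{x,y}}\max_i \tilde{\mcH}(\gamma_i) = \beta\bigl(G^0(x,y)\wedge c - \mcH_{min}\bigr).$$
Since $0 \leq R \leq R_{\max} := \max_x R(x)$, both $G^f_{\epsilon,c}$ and $\mcH^f_{\epsilon,c}$ coincide with their tilde counterparts up to additive error $R_{\max}$. Substituting into \eqref{eq:chland} and noting $\min \mcH^f_{\epsilon,c} = 0$ delivers
$$\beta c^* - 2R_{\max} \leq H^f_{\epsilon,c} \leq \beta c^* + R_{\max},$$
after which exponentiating and folding $e^{R_{\max}}$ into the prefactors yields the two outer inequalities with the advertised $C_1(\beta)$ and $C_4$. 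The $\beta^{-1}\log$ limit follows because $R_{\max} = \mathcal{O}(\log\beta)$ is subexponential.

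The hard part will be the residual estimate near the threshold $u = c$, where $1/(f(u-c)+\epsilon)$ is as large as $\beta$; the hypothesis $f(u) \geq u$ on small $u$ is essential to convert what would otherwise be a linear-in-$\beta$ contribution into an $\mathcal{O}(\log\beta)$ one. A secondary technical point is the commutation $\min_\gamma \max_i (a_i \wedge c) = (\min_\gamma \max_i a_i) \wedge c$, which follows from pointwise monotonicity of clipping applied first to $\max_i$ and then to $\min_\gamma$; with this in hand the rest of the argument is essentially bookkeeping.
\medskip
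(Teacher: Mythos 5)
Your overall route is the same as the paper's: the two middle inequalities come from the Holley--Stroock critical-height bound on the spectral gap, and the outer ones from decomposing $\mcH^f_{\epsilon,c}$ into the clipped linear part $\beta(\mcH\wedge c-\mcH_{min})$ plus a remainder $R\leq \log(1+\beta(\max\mcH-c))+(\max\mcH-\min\mcH)/f(\delta)$. The paper carries out this comparison as a three-case estimate on pairwise differences $\mcH^f_{\epsilon,c}(x_1)-\mcH^f_{\epsilon,c}(x_2)$ rather than through your global decomposition and the commutation of clipping with $\min_\gamma\max_i$, but the content is identical; your handling of the region near $u=c$ (splitting at $c+\delta$ and invoking $f(x)\geq x$ only on $[0,\delta]$) is if anything slightly more faithful to the ``for small enough $x$'' hypothesis than the paper's write-up.

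The one genuine shortfall is your lower bound $H^f_{\epsilon,c}\geq\beta c^*-2R_{\max}$. Since $R_{\max}$ contains the $\log(1+\beta(\max\mcH-c))$ term, folding $e^{2R_{\max}}$ into the prefactor produces a $C_4$ that grows polynomially in $\beta$, whereas the theorem asserts a temperature-independent constant $C_4$; the limit $\frac1\beta\log\lambda_2(-M^f_{\epsilon,c})\to-c^*$ survives (a subexponential prefactor is harmless there), but the displayed chain of inequalities is not established as stated. The loss is avoidable. Writing $\mcH^f_{\epsilon,c}=\tilde\mcH+R$ with $R\geq 0$ non-decreasing in $\mcH$, at any pair $(x,y)$ attaining the maximum defining $c^*$ one has $G^0(x,y)\geq\max\{\mcH(x),\mcH(y)\}$, so the remainder evaluated at the saddle dominates the remainder at the higher endpoint, and the only possible loss is $-R$ at the lower endpoint. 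If $c^*>0$, that lower endpoint must satisfy $\mcH\leq c$ (a pair with both energies strictly above $c$ contributes $\mcH_{min}-c\leq 0$ to the maximum), so its remainder vanishes; if $c^*=0$, the bound $H^f_{\epsilon,c}\geq 0$ is trivial by taking $x=y$ at a global minimizer. Hence in fact $H^f_{\epsilon,c}\geq\beta c^*$ with no error term, which delivers a genuine constant $C_4$. With that repair, the rest of your argument is correct and complete.
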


As a corollary of the above result Theorem \ref{thm:asympspectral}, using the asymptotics of the spectral gap we derive similar asymptotics of the mixing time and tunneling time on the modified landscape:
\begin{corollary}[Asymptotics of mixing and tunneling times in the low-temperature regime]\label{cor:mixtun}
	Assume the same setting as in Theorem \ref{thm:asympspectral}. Let $S_{\textrm{min}} := \arg \min \mcH(x)$ be the set of global minima of $\mcH$, $\eta \in S_{\textrm{min}}$ and $\sigma,\eta$ attain $H^0$ such that $H^0 = G^0(\sigma,\eta) - \mcH(\sigma)$. Then the following statements hold:
	\begin{enumerate}
		\item\label{it:cormix}
		$$\lim_{\epsilon \to 0} \epsilon \log t_{mix}(M^f_{\epsilon,c},1/4) = c^*.$$
		\item\label{it:cortun}
		$$\lim_{\epsilon \to 0} \epsilon \log \E_{\sigma}(\tau^f_\eta) = c^* \leq H^0 = \lim_{\epsilon \to 0} \epsilon \log \E_{\sigma}(\tau^0_\eta).$$
	\end{enumerate}
	In particular, when $c = \mcH_{\textrm{min}}$, we have subexponential tunneling time as $\lim_{\epsilon \to 0} \epsilon \log \E_{\sigma}(\tau^f_\eta) = 0 = c^*$.
\end{corollary}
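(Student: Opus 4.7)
The plan is to bootstrap Theorem~\ref{thm:asympspectral}, which identifies $\lim_{\beta\to\infty}\tfrac{1}{\beta}\log\lambda_2(-M^f_{\epsilon,c}) = -c^*$, into the desired statements by means of classical inequalities linking spectral gap, mixing time and mean hitting time on a finite reversible chain. Two elementary quantitative inputs will be used repeatedly: from its integral definition the modified Hamiltonian obeys $\mcH^f_{\epsilon,c}(x) \leq \beta(\mcH(x)-\mcH_{min})$, so $\log(1/\pi^f_{\epsilon,c,min}) = O(\beta)$; and $\pi^f_{\epsilon,c}(\eta) \geq \mu(\eta)$ whenever $\eta \in S_{min}$, because then $\mcH^f_{\epsilon,c}(\eta) = 0$ and the partition function is at most one.

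For item~\eqref{it:cormix}, I would apply the standard sandwich
\begin{equation*}
(t_{rel}-1)\log 2 \;\leq\; t_{mix}(M^f_{\epsilon,c},1/4) \;\leq\; t_{rel}\,\log\!\left(\frac{4}{\pi^f_{\epsilon,c,min}}\right), \quad t_{rel}=\tfrac{1}{\lambda_2(-M^f_{\epsilon,c})},
\end{equation*}
take $\tfrac{1}{\beta}\log$, and observe that by the first input above $\tfrac{1}{\beta}\log\log(1/\pi^f_{\epsilon,c,min}) = O(\log\beta / \beta)\to 0$, so both endpoints converge to $c^*$ by Theorem~\ref{thm:asympspectral}. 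For the upper bound in~\eqref{it:cortun}, the random-target inequality $\pi^f_{\epsilon,c}(\eta)\E_\sigma(\tau^f_\eta) \leq |\mathcal{X}|/\lambda_2(-M^f_{\epsilon,c})$, exactly as used in \eqref{eq:limsuphit}, combined with $\pi^f_{\epsilon,c}(\eta) \geq \mu(\eta)$ and Theorem~\ref{thm:asympspectral}, gives $\limsup\tfrac{1}{\beta}\log\E_\sigma(\tau^f_\eta) \leq c^*$. The matching lower bound I would obtain by applying the classical low-temperature Metropolis hitting-time asymptotic (Holley-Kusuoka-Stroock, Catoni, Trouv\'e and Miclo) to the modified chain read as a reversible chain at unit inverse temperature with Hamiltonian $\mcH^f_{\epsilon,c}$, producing $\liminf\tfrac{1}{\beta}\log\E_\sigma(\tau^f_\eta) \geq \lim_{\beta\to\infty} H^f_{\epsilon,c}/\beta = c^*$; the same classical result applied to the unmodified chain at inverse temperature $\beta$ delivers $\lim\tfrac{1}{\beta}\log\E_\sigma(\tau^0_\eta) = H^0$. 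Finally $c^* \leq H^0$ follows from the identity $G^0(x,y)\wedge c = \min_\gamma \max_i(\mcH(\gamma_i)\wedge c)$, which shows that $c^*$ is the critical height of the \emph{clipped} Hamiltonian $\mcH\wedge c$, and a short pair-by-pair case analysis using $\mcH(x),\mcH(y)\leq G^0(x,y)$ confirms that clipping cannot increase the maximum.

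The main obstacle will be the lower bound in~\eqref{it:cortun}: because $\mcH^f_{\epsilon,c}$ itself depends on $\beta$ through $\epsilon = 1/\beta$, the classical low-temperature hitting-time estimate cannot be invoked as a black box and the subexponential prefactors must be controlled uniformly in $\beta$. This is precisely what the hypothesis $f(x) \geq x$ for small $x > 0$ in Theorem~\ref{thm:asympspectral} secures, by keeping $C_1(\beta)$ subexponential. As a fallback I would derive the lower bound directly from an exit-rate estimate out of a well of the modified landscape containing $\sigma$, where the effective barrier has asymptotic height $\beta c^*$ by the $c^*$-identification in Theorem~\ref{thm:asympspectral}.
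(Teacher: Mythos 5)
Your treatment of item \eqref{it:cormix} and of the upper bound in item \eqref{it:cortun} coincides with the paper's: the mixing-time sandwich in terms of the relaxation time plus the observation that $\log(4/\pi^f_{min})$ grows only linearly in $\beta$ (since $\mcH^f_{\epsilon,c}(x)\leq \beta(\mcH(x)-\mcH_{min})$ and $Z^f_{\epsilon,c}\leq 1$), and the random-target lemma combined with $\pi^f_{\epsilon,c}(\eta)\geq\mu(\eta)$ for $\eta\in S_{min}$, are exactly the arguments used. Your clipping argument for $c^*\leq H^0$ is sound and is in fact more explicit than anything in the paper, which leaves that inequality to the reader.

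The genuine gap is the lower bound $\liminf_{\beta\to\infty}\frac{1}{\beta}\log\E_\sigma(\tau^f_\eta)\geq c^*$. You correctly diagnose that the classical Holley--Stroock/Catoni hitting-time asymptotics cannot be invoked as a black box because $\mcH^f_{\epsilon,c}$ itself depends on $\beta$, but your fallback (``an exit-rate estimate out of a well'') is a direction, not an argument, and the subexponentiality of $C_1(\beta)$ in Theorem \ref{thm:asympspectral} controls the spectral gap, not the hitting time of the specific pair $(\sigma,\eta)$. The paper closes this step with a short potential-theoretic computation: it introduces the equilibrium potential $h^f_{\sigma,\eta}(x)=\mP_x(\tau^f_\sigma<\tau^f_\eta)$ and the capacity $\mathrm{cap}^{M^f_{\epsilon,c}}(\sigma,\eta)$, bounds the capacity from above via the Dirichlet principle with the indicator of the valley $\Phi(\sigma,\eta)=\{x:G^0(x,\sigma)\leq G^0(x,\eta)\}$ as test function (every edge leaving $\Phi(\sigma,\eta)$ has conductance at most a constant times $e^{-\beta G^0(\sigma,\eta)}/Z^f_{\epsilon,c}$, using the pointwise lower bound $\mcH^f_{\epsilon,c}(x_1)-\mcH^f_{\epsilon,c}(x_2)\geq\beta(\mcH(x_1)\wedge c-\mcH(x_2)\wedge c)$ established in the proof of Theorem \ref{thm:asympspectral}), and then applies the mean hitting time formula $\E_\sigma(\tau^f_\eta)=\mathrm{cap}^{M^f_{\epsilon,c}}(\sigma,\eta)^{-1}\sum_y\pi^f_{\epsilon,c}(y)h^f_{\sigma,\eta}(y)\geq\pi^f_{\epsilon,c}(\sigma)/\mathrm{cap}^{M^f_{\epsilon,c}}(\sigma,\eta)$ to extract the factor $e^{\beta c^*}$. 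Without this (or an equivalently quantitative barrier-crossing estimate), your proof of the first equality in item \eqref{it:cortun} is incomplete; the remaining identity $\lim_{\beta\to\infty}\frac{1}{\beta}\log\E_\sigma(\tau^0_\eta)=H^0$ for the unmodified chain is, as you say, classical.
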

Note that in the case where both $\sigma, \eta \in S_{\textrm{min}}$ with initial state $\sigma$, it is a reasonable choice to pick the parameter $c = \mcH(X^f(0)) = \mcH(\sigma) = \mcH_{\textrm{min}}$, and in this setting we have subexponential tunneling time on the modified landscape. For instance, in applications we may know about a global minimizer $\sigma$, and by setting $c = \mcH(\sigma)$ we can search for other possible global minimizer(s) owing to the subexponential tunneling in the low-temperature regime. For a concrete example, in the Widom-Rowlinson model with $m \in \mathbb{N}$ particle types, $S_{\textrm{min}}$ is precisely the set of configurations in which all sites are occupied by particles of the same type and hence both $S_{\textrm{min}}$ and $\mcH_{\textrm{min}}$ are known in this model. We refer interested readers to \cite{NZ19,Z18} for work on the energy landscape analysis of various statistical physics models in this direction.

To prove convergence result for simulated annealing with landscape modification, as our target function $\mcH^f_{\epsilon,c}$ depends on time through the cooling schedule, we are in the setting of simulated annealing with time-dependent energy function as in \cite{Lowe96}. We first present the following auxillary lemma, where we verify various assumptions in \cite{Lowe96} in our setting. We also decide to put it in this section rather than in the proof since it will help to better understand the convergence result in Theorem \ref{thm:mainsaland} below.

\begin{lemma}\label{lem:sa}
	Assume the same setting as in Theorem \ref{thm:asympspectral}. Let $M := \max \mcH - \min \mcH$, $\beta_t := 1/\epsilon_t$ and the cooling schedule is, for small enough $\epsilon$ such that $M + \max \mcH - c > \epsilon > 0$ and $t \geq 0$,
	$$\epsilon_t = \dfrac{c^* + \epsilon}{\ln(t+1)}.$$
	We have
	\begin{enumerate}
		\item\label{it:timed1} For all $x \in \mathcal{X}$ and all $t \geq 0$, 
		$$0 \leq \epsilon_t \mcH^f_{\epsilon_t,c}(x) \leq M.$$
		
		\item\label{it:timed2} For all $x \in \mathcal{X}$, $$\left|\dfrac{\partial}{\partial t} \epsilon_t \mcH^f_{\epsilon_t,c}(x)\right| \leq \dfrac{2M}{(\ln(1+t))(1+t)}.$$
		
		\item\label{it:timed3} Let $R_t := \sup_x \frac{\partial}{\partial t} \epsilon_t \mcH^f_{\epsilon_t,c}(x)$ and $B := 6M/(c^*+\epsilon)$. For all $t \geq 0$,
		$$\beta_t^{\prime} M + \beta_t R_t \leq \dfrac{3M}{(c^*+\epsilon)(1+t)} = \dfrac{B}{2(1+t)}.$$
		
		\item\label{it:timed4} Let
		$$p := \dfrac{2M}{M + \max \mcH - \epsilon - c} > 2,$$
		and
		$$A := \begin{cases} 
			\dfrac{1}{C_2 (\min_x \mu(x))^{(p-2)/p}}\exp\bigg\{\frac{1}{f(\delta)} \left(\max \mcH - \min \mcH\right)\bigg\}, & \mbox{if } c < \max \mcH; \\
			\dfrac{1}{C_2 (\min_x \mu(x))^{(p-2)/p}}, & \mbox{if } c = \max \mcH, 
		\end{cases}$$
		where $C_2, \delta$ are as in Theorem \ref{thm:asympspectral} and we recall that $\mu$ is the stationary measure of the proposal generator $Q$. For $g \in \ell^p(\pi_{\epsilon_t,c}^f)$, we have
		$$\norm{g - \pi_{\epsilon_t,c}^f(g)}_{\ell^p(\pi^f_{\epsilon_t,c})}^2 \leq A (1+t) \langle -M^f_{\epsilon_t,c}g,g \rangle_{\pi^f_{\epsilon_t,c}}.$$
	\end{enumerate}
\end{lemma}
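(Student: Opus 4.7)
My plan is to verify the four items in turn. Items~\eqref{it:timed1}--\eqref{it:timed3} are direct calculus computations that follow from the definition of $\mcH^f_{\epsilon,c}$ in~\eqref{eq:mcHeps} and the specific form of the logarithmic cooling schedule, whereas item~\eqref{it:timed4} is the substantive step, combining the spectral gap asymptotics of Theorem~\ref{thm:asympspectral} with an $\ell^2$-to-$\ell^p$ interpolation.

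For item~\eqref{it:timed1}, I will use the trivial bound $\frac{1}{f((u-c)_+) + \epsilon_t} \leq \frac{1}{\epsilon_t}$ on the integrand of~\eqref{eq:mcHeps} to obtain $0 \leq \mcH^f_{\epsilon_t,c}(x) \leq (\mcH(x)-\mcH_{min})/\epsilon_t \leq M/\epsilon_t$. For item~\eqref{it:timed2}, I will rewrite
\begin{align*}
\epsilon_t \mcH^f_{\epsilon_t,c}(x) = \int_{\mcH_{min}}^{\mcH(x)} \frac{\epsilon_t}{f((u-c)_+) + \epsilon_t}\, du,
\end{align*}
differentiate under the integral sign so that the integrand becomes $\epsilon_t' \cdot f/(f+\epsilon_t)^2$, bound this via $f/(f+\epsilon_t)^2 \leq 1/\epsilon_t$, and combine with $\epsilon_t' = -\epsilon_t/[(1+t)\ln(1+t)]$. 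Item~\eqref{it:timed3} then follows by substituting the explicit expression $\beta_t' = 1/[(c^*+\epsilon)(1+t)]$ together with the bound from item~\eqref{it:timed2}, and the two terms combine cleanly to $3M/[(c^*+\epsilon)(1+t)]$.

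For the substantive item~\eqref{it:timed4}, my plan is to proceed in three stages. First, I will apply the variational characterization~\eqref{eq:spectralgap} and the upper bound $1/\lambda_2(-M^f_{\epsilon_t,c}) \leq C_1(\beta_t) e^{\beta_t c^*}$ from Theorem~\ref{thm:asympspectral} to obtain the $\ell^2$ Poincar\'e inequality
\begin{align*}
\norm{g - \pi^f_{\epsilon_t,c}(g)}_{\ell^2(\pi^f_{\epsilon_t,c})}^2 \leq C_1(\beta_t)\, e^{\beta_t c^*}\, \langle -M^f_{\epsilon_t,c} g, g \rangle_{\pi^f_{\epsilon_t,c}}.
\end{align*}
Second, for $p \geq 2$ I will invoke the elementary comparison $\norm{h}_{\ell^p(\pi)}^2 \leq \norm{h}_{\ell^2(\pi)}^2 / (\min_x \pi(x))^{(p-2)/p}$, together with the lower bound $\min_x \pi^f_{\epsilon_t,c}(x) \geq e^{-\beta_t M} \min_x \mu(x)$, which is immediate from item~\eqref{it:timed1} and the fact that $Z_t = \sum_y e^{-\mcH^f_{\epsilon_t,c}(y)} \mu(y) \leq 1$. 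Third, the specific choice $p = 2M/(M + \max\mcH - \epsilon - c)$ makes $M(p-2)/p = \epsilon + c - \max\mcH$, so that $\exp\{\beta_t[c^* + M(p-2)/p]\} = (1+t)^{(c^*+\epsilon+c-\max\mcH)/(c^*+\epsilon)}$, an exponent bounded by $1$ precisely because $c \leq \max\mcH$.

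The point I expect to be the main obstacle is absorbing the polynomial-in-$\beta_t$ prefactor $1+\beta_t(\max\mcH - c)$ appearing in $C_1(\beta_t)$ into the target factor $(1+t)$ when $c < \max\mcH$. My plan is to invoke the auxiliary inequality $1 + a\ln(1+t) \leq (1+t)^a$, valid for all $a \geq 0$ and $t \geq 0$, which can be verified by noting that the difference vanishes at $t=0$ and has non-negative derivative. Applying this with $a = (\max\mcH - c)/(c^*+\epsilon)$ and combining with the power $(1+t)^{1-a}$ produced in the previous paragraph recovers exactly one power of $(1+t)$, and the surviving constants reassemble into $A$ as stated, with the case $c = \max\mcH$ (where the $\exp\{M/f(\delta)\}$ and polynomial prefactors drop out) following by the same argument with $a = 0$.
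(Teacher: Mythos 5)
Your proposal is correct and follows essentially the same route as the paper: items (1)--(3) by the same direct calculus on the integral representation of $\mcH^f_{\epsilon_t,c}$ (your single-integrand differentiation in item (2) even yields the sharper constant $M$ in place of $2M$), and item (4) by the spectral-gap bound of Theorem \ref{thm:asympspectral} combined with the $\ell^2$-to-$\ell^p$ comparison and the choice of $p$, which is precisely the computation of \cite[Lemma 3.5]{Lowe96} that the paper cites rather than reproduces. Your auxiliary inequality $1+a\ln(1+t)\leq(1+t)^a$ is just $1+x\leq e^x$ evaluated at $x=\beta_t(\max\mcH-c)$, which is exactly how the paper absorbs the polynomial prefactor.
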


\begin{rk}
	Item \eqref{it:timed1} and item \eqref{it:timed2} correspond to respectively equation $(2.11)$ and $(2.12)$ in \cite{Lowe96}, while the lower bound of the spectral gap in Theorem \ref{thm:asympspectral} verify equation $(2.13)$ in \cite{Lowe96}.  Assumptions $(A1)$ and $(A2)$ in \cite{Lowe96} are checked in item \eqref{it:timed3} and item \eqref{it:timed4} respectively.
\end{rk}

With the above Lemma and the notations introduced there, we are ready to give one of the main results of this paper concerning the large-time convergence of discrete simulated annealing with landscape modification. The gain of landscape modification in simulated annealing can be seen by operating a possibly faster logarithmic cooling schedule with clipped critical height $c^*$, while in classical simulated annealing the critical height is $H^0$. The possible benefit thus depends on the tuning of $c$ since $c^* \leq c - \mcH_{\textrm{min}}$. This result is analogous to the result that we have obtained in \cite{C20KSA} for improved kinetic simulated annealing. Similar improvement of logarithmic cooling schedule by means of reduction in critical height can be found in the infinite swapping algorithm \cite{MSTW22}.

\begin{theorem}\label{thm:mainsaland}
	Assume the same setting as in Theorem \ref{thm:asympspectral}. Let $A,B,p$ be the quantities as introduced in Lemma \ref{lem:sa}. Define
	\begin{align*}
		\overline{\epsilon} &:= \dfrac{p-2}{p}, \quad K := \dfrac{4(1+2AB)}{1 - \exp\{-\frac{1}{2A} - B\}}, \quad S_{\textrm{min}} := \arg \min \mcH, \quad \underline{d} := \min_{x;~\mcH(x) \neq \mcH_{\textrm{min}}} \mcH(x).
	\end{align*}
	Under the cooling schedule of the form, for any $\epsilon > 0$ as in Lemma \ref{lem:sa},
	$$\epsilon_t = \dfrac{c^* + \epsilon}{\ln(t+1)},$$
	we then have, for any $x \in \mathcal{X} \backslash S_{\textrm{min}}$ and $t \geq e^{1/\overline{\epsilon}}-1$,
	$$\mathbb{P}_x\left(\tau^f_{S_{\textrm{min}}} > t\right) \leq (1+K^{\frac{1}{2\overline{\epsilon}}}) \sqrt{\pi^f_{\epsilon_t,c}(\mathcal{X} \backslash S_{\textrm{min}})}  + \pi^f_{\epsilon_t,c}(\mathcal{X} \backslash S_{\textrm{min}}) \to 0 \quad \text{as } t \to \infty.$$
	Note that
	\begin{align*}
		\pi^f_{\epsilon_t,c}(\mathcal{X} \backslash S_{\textrm{min}}) &\leq \begin{cases} 
			\dfrac{1}{\mu(S_{\textrm{min}})} e^{-\frac{1}{\epsilon_t}(\underline{d} - \mcH_{\textrm{min}})} , & \mbox{if } c \geq \underline{d}; \\
			\dfrac{1}{\mu(S_{\textrm{min}})}\exp\bigg\{-\int_{c}^{\underline{d}} \frac{1}{f(u-c) + \epsilon_t}du\bigg\}, & \mbox{if } \mcH_{\textrm{min}} \leq c < \underline{d}.
		\end{cases}
	\end{align*}
\end{theorem}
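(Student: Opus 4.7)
My plan is to reduce the theorem to two separate tasks: (i) a bound on the first hitting time of $S_{min}$ in terms of the stationary mass $\pi^f_{\epsilon_t,c}(\mathcal{X}\setminus S_{min})$ and a mixing correction, and (ii) a direct estimate of that stationary mass. The first task will be carried out by appealing to \cite{Lowe96}'s framework for simulated annealing with time-dependent target, whose hypotheses are precisely the four items of Lemma \ref{lem:sa}. The second task is a completely elementary integral estimate.

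For task (i), I would first observe the trivial inclusion $\{\tau^f_{S_{min}}>t\}\subseteq\{X^f(t)\in\mathcal{X}\setminus S_{min}\}$, which reduces the task to controlling the marginal $\mathbb{P}_x(X^f(t)\in\mathcal{X}\setminus S_{min})$. Writing $\nu_t(\cdot)=\mathbb{P}_x(X^f(t)\in\cdot)$ and decomposing
\[
\nu_t(\mathcal{X}\setminus S_{min}) = \pi^f_{\epsilon_t,c}(\mathcal{X}\setminus S_{min}) + \bigl(\nu_t - \pi^f_{\epsilon_t,c}\bigr)(\mathcal{X}\setminus S_{min}),
\]
the second term is a signed mass which I would bound using the $\ell^p\to\ell^2$ argument from \cite{Lowe96}. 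Concretely, the $\ell^p$-Poincar\'e inequality of Lemma \ref{lem:sa}\eqref{it:timed4}, together with the time-regularity estimates \eqref{it:timed1}--\eqref{it:timed3}, implies an energy-dissipation differential inequality for $\|d\nu_t/d\pi^f_{\epsilon_t,c}-1\|_{\ell^p(\pi^f_{\epsilon_t,c})}$. Integrating this ODE along the logarithmic cooling schedule $\epsilon_t=(c^*+\epsilon)/\ln(t+1)$, the constants $A,B,\overline{\epsilon},K$ from the statement arise naturally, and a Cauchy--Schwarz step (using that $\mathcal{X}\setminus S_{min}$ has stationary mass $\pi^f_{\epsilon_t,c}(\mathcal{X}\setminus S_{min})$) converts the resulting $\ell^p$-bound into the announced $\sqrt{\pi^f_{\epsilon_t,c}(\mathcal{X}\setminus S_{min})}$ factor. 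This is exactly the mechanism behind the classical Holley--Kusuoka--Stroock / L\"owe convergence; the novelty here is only that the clipped critical height $c^*$ replaces $H^0$, which is already encoded in Lemma \ref{lem:sa}\eqref{it:timed4} through the constant $A$ and the spectral gap bound of Theorem \ref{thm:asympspectral}.

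For task (ii), I would work directly with $\pi^f_{\epsilon_t,c}(x)\propto e^{-\mcH^f_{\epsilon_t,c}(x)}\mu(x)$. Using that $\mcH^f_{\epsilon_t,c}$ vanishes on $S_{min}$, the denominator is at least $\mu(S_{min})$, and the numerator satisfies $\sum_{x\notin S_{min}}e^{-\mcH^f_{\epsilon_t,c}(x)}\mu(x)\le\max_{x\notin S_{min}}e^{-\mcH^f_{\epsilon_t,c}(x)}$. For each such $x$ we have $\mcH(x)\ge\underline{d}$, so
\[
\mcH^f_{\epsilon_t,c}(x) \ge \int_{\mcH_{min}}^{\underline{d}} \frac{du}{f((u-c)_+)+\epsilon_t}.
\]
In the regime $c\ge\underline{d}$ the integrand equals $\beta_t$ throughout, giving the first alternative; in the regime $\mcH_{min}\le c<\underline{d}$, I split the integral at $c$, discard the non-negative contribution on $[\mcH_{min},c]$, and retain $\int_c^{\underline{d}}\bigl(f(u-c)+\epsilon_t\bigr)^{-1}du$, giving the second alternative.

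The main obstacle, in my view, is task (i): verifying that the ODE argument of \cite{Lowe96} transports cleanly to the landscape-modified setting. The subtlety is that $\mcH^f_{\epsilon_t,c}$ depends on $\epsilon_t$ in a nonlinear way through the integrand $1/(f((u-c)_+)+\epsilon_t)$, so the time-derivative and boundedness estimates feeding into the dissipation inequality are not quite those of the classical setup; this is why Lemma \ref{lem:sa} had to be established separately, and why the constants $A,B,K,\overline{\epsilon}$ take the particular forms they do. Once those bounds are in hand, the rest of the argument is essentially bookkeeping along the logarithmic schedule, and convergence $\mathbb{P}_x(\tau^f_{S_{min}}>t)\to 0$ follows from $\pi^f_{\epsilon_t,c}(\mathcal{X}\setminus S_{min})\to 0$, which is immediate from the explicit bound in task (ii) since $\beta_t\to\infty$.
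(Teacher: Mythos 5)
Your proposal is correct and follows essentially the same route as the paper: both reduce the hitting-time bound to L\"owe's framework for time-dependent simulated annealing (with the four items of Lemma \ref{lem:sa} serving as its hypotheses), obtain the $\ell^2$ bound $1+K^{1/(2\overline{\epsilon})}$ on the density $h_t = d\nu_t/d\pi^f_{\epsilon_t,c}$, convert it via Cauchy--Schwarz into the $\sqrt{\pi^f_{\epsilon_t,c}(\mathcal{X}\setminus S_{min})}$ term, and finish with the same elementary two-case integral estimate of the stationary mass. The only difference is cosmetic: you sketch the internal dissipation-ODE mechanism that the paper delegates to citations of \cite{Lowe96} and \cite{HS88}.
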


\begin{rk}[On tuning the threshold parameter $c$]
	There are various ways to tune the parameter $c$ for improved convergence. In \cite{C20KSA}, we propose to use the running minimum generated by the algorithm to tune $c$. Note that for the Curie-Weiss model, in the second paragraph below Theorem \ref{thm:landmodCW} we have already explained how one can tune the parameter $c$ in that setting.
\end{rk}

\subsection{Numerical illustrations}\label{subsec:numerical}

Before we proceed to discuss the proofs of the main results above, we illustrate and compare the convergence performance of simulated annealing with landscape modification, that we call improved simulated annealing (ISA), against the classical simulated annealing algorithm (SA) on the travelling salesman problem (TSP). We first state the parameters that we used to generate the numerical results:

\textbf{TSP and its objective function}. 50 nodes are uniformly random on the grid $[0,100] \times [0,100]$. The objective is to find a configuration that minimize the total Euclidean distance with the same starting and ending point. Each node can only be visited once.

\textbf{Initial configuration}. Both ISA and SA have the same initialization. They are initialized using the output of the nearest-neighbour algorithm: a node is randomly chosen as the starting point, which is then connected to the closest unvisited node. It repeats until every node has been visited, and subsequently the last node is connected back to the starting node.

\textbf{Proposal chain}. Both ISA and SA share the same proposal chain: at each step, a proposal move is generated using the 2-OPT algorithm \cite{C58}.

\textbf{Acceptance-rejection mechanism}. In SA, the proposed move is accepted with probability \\
$\min \left\{ 1,e^{\beta(\mathcal{H}(x)-\mathcal{H}(y))} \right\}$, while in ISA, the acceptance probability is computed as in Section \ref{subsec:linearf}. In other words, we use a linear $f$ in ISA. Both SA and ISA share the same source of randomness.

\textbf{Cooling schedule}. Both ISA and SA use the same logarithmic cooling schedule of the form
$$\epsilon_t = \dfrac{\sqrt{50}}{\ln(t+1)}.$$

\textbf{Choice of $c$ in ISA}. In this experiment, if we denote the proposal configuration at time $t$ to be $y_t$, we set $c = c_t$ to be
$$c_t = \mathcal{H}(y_t) - 5.$$
This tuning strategy has been discussed in Section \ref{subsec:tuning}.

\textbf{Number of iterations}. We run both ISA and SA for 100,000 iterations.

We generate 1000 random TSP instances. For each instance we compute what we call the improvement percentage ($\textrm{IP}$) of ISA over SA, defined by
$$\textrm{IP} := 100 \dfrac{\min_{t \in [0,100,000]} \mathcal{H}(X^0(t)) - \min_{t \in [0,100,000]} \mathcal{H}(X^f(t))}{\min_{t \in [0,100,000]} \mathcal{H}(X^0(t))}.$$

The summary statistics of $\textrm{IP}$ are provided in Table \ref{table:1}, while its histogram over these 1000 instances can be found in Figure \ref{fig:Histogram_TSP}. The code for reproducing these results can be found in \url{https://github.com/mchchoi/Improved-discrete-simulated-annealing}.

The summary statistics in Table \ref{table:1} and the histogram in Figure \ref{fig:Histogram_TSP} offer empirical evidence in using ISA over SA: out of the 1000 TSP instances, there are 798 instances in which the improvement percentage $\textrm{IP}$ is non-negative. The sample mean of $\textrm{IP}$ is approximately 1.87\% while its sample median is 1.47\%.

Next, we look into a particular instance and investigate the difference between SA and ISA in Figure \ref{fig:Objfunction_TSP}. We see that SA (blue curve) is stuck at a local minimum, while ISA (orange curve) is able to escape the local minimum, owing to the increased acceptance probability compared with SA, and it reaches regions where the objective value is smaller than that of SA.

\begin{table}[H]
	\begin{tabular}{c|c}
			Sample mean	& 1.87\% \\ \hline
			Sample median	& 1.47\% \\ \hline
			Sample maximum	& 11.35\% \\ \hline
			Sample minimum	& -9.21\% \\ \hline
			Numbers of $\textrm{IP} \geq 0$	& 798 \\ \hline
			Numbers of $\textrm{IP} < 0$	& 202
		\end{tabular}
	\caption{Summary statistics of $\textrm{IP}$ on 1000 random TSP instances}
	\label{table:1}
\end{table}

\begin{figure}[H]
	\includegraphics[width=0.6\linewidth]{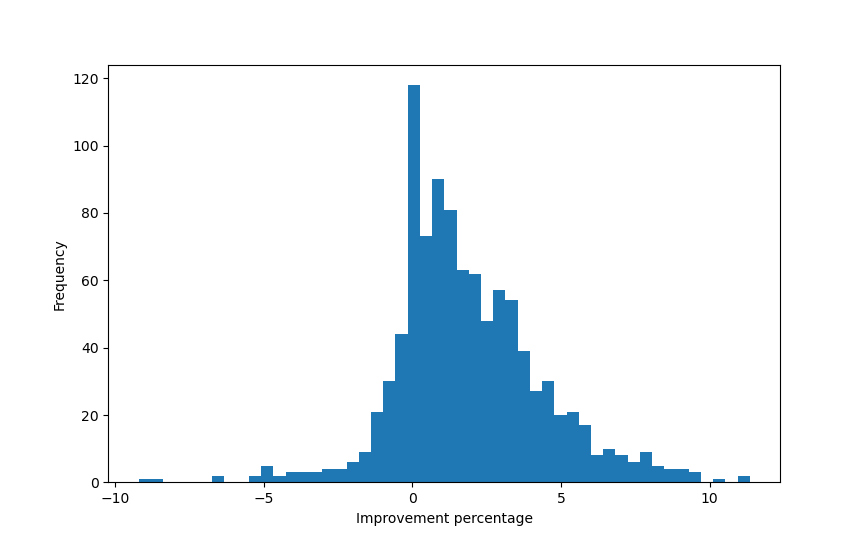}
	\caption{Histogram of improvement percentage of ISA over SA on 1000 randomly generated TSP instances}
	\label{fig:Histogram_TSP}
\end{figure}

\begin{figure}[H]
	\includegraphics[width=0.55\linewidth]{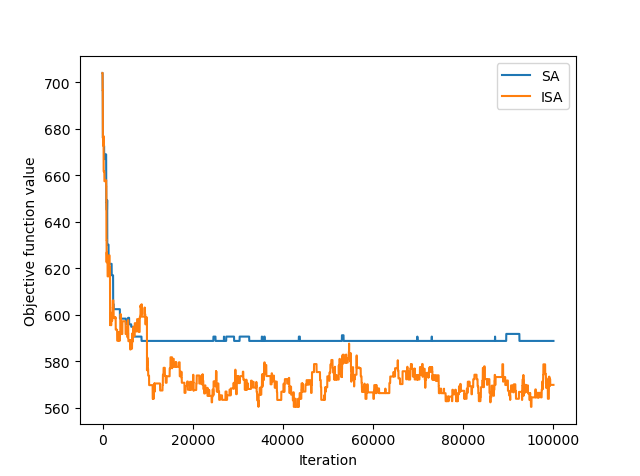}
	\caption{TSP objective value against iteration of ISA and SA}
	\label{fig:Objfunction_TSP}
\end{figure}

The rest of this section is devoted to the proofs of Theorem \ref{thm:asympspectral}, Corollary \ref{cor:mixtun}, Lemma \ref{lem:sa} and Theorem \ref{thm:mainsaland}.

\subsection{Proof of Theorem \ref{thm:asympspectral}}

First, using the classical result by \cite[Theorem $2.1$]{HS88}, it is immediate that
\begin{align*}
	C_2 e^{-H^f_{\epsilon,c}} \leq \lambda_2(-M^f_{\epsilon,c}) \leq C_3 e^{-H^f_{\epsilon,c}}.
\end{align*}
For any arbitrary $x_1, x_2 \in \{\mcH(x_1) \geq \mcH(x_2)\}$, we deduce the following upper bound:
\begin{align*}
	\mathcal{H}^f_{\epsilon,c}(x_1) - \mathcal{H}^f_{\epsilon,c}(x_2) &= \int_{\mcH(x_2)}^{\mathcal{H}(x_1)} \dfrac{1}{f((u-c)_+) + \epsilon}\,du \\ 
	&= \begin{cases} 
		\frac{1}{\epsilon}(\mcH(x_1) - \mcH(x_2)), & \mbox{if } c\geq \mcH(x_1) > \mcH(x_2); \\
		\frac{1}{\epsilon}(c - \mcH(x_2)) + \int_{c}^{\mcH(x_1)} \frac{1}{f(u-c) + \epsilon}\,du, & \mbox{if } \mcH(x_1) > c \geq \mcH(x_2); \\
		\int_{\mcH(x_2)}^{\mcH(x_1)} \frac{1}{f(u-c) + \epsilon}\,du, & \mbox{if } \mcH(x_1) > \mcH(x_2) > c. 
	\end{cases} \\
	&\leq \begin{cases} 
		\frac{1}{\epsilon}(\mcH(x_1) \wedge c - \mcH(x_2) \wedge c), & \mbox{if } c\geq \mcH(x_1) > \mcH(x_2); \\
		\frac{1}{\epsilon}(\mcH(x_1) \wedge c - \mcH(x_2)\wedge c) + \ln(1+\frac{1}{\epsilon}(\max \mcH-c)), & \mbox{if } \mcH(x_1) > c \geq \mcH(x_2); \\
		\frac{1}{f(\delta)} \left(\max \mcH - \min \mcH\right), & \mbox{if } \mcH(x_1) > \mcH(x_2) > c,.
	\end{cases}
\end{align*}
As a result, $C_2^{-1}e^{H^f_{\epsilon,c}} \leq e^{\frac{1}{\epsilon} c^*} C_1(\epsilon)$. On the other hand, we have the following lower bound:
\begin{align*}
	\mathcal{H}^f_{\epsilon,c}(x_1) - \mathcal{H}^f_{\epsilon,c}(x_2) 
	&\geq \begin{cases} 
		\frac{1}{\epsilon}(\mcH(x_1) \wedge c - \mcH(x_2) \wedge c), & \mbox{if } c\geq \mcH(x_1) > \mcH(x_2); \\
		\frac{1}{\epsilon}(\mcH(x_1) \wedge c - \mcH(x_2)\wedge c), & \mbox{if } \mcH(x_1) > c \geq \mcH(x_2); \\
		\frac{1}{f(\mcH(x_1) - c)} \left(\mcH(x_1) - \mcH(x_2)\right), & \mbox{if } \mcH(x_1) > \mcH(x_2) > c, 
	\end{cases}
\end{align*}
and hence $e^{H^f_{\epsilon,c}} \geq e^{\frac{1}{\epsilon} c^*} C_4^{-1}$.

\subsection{Proof of Corollary \ref{cor:mixtun}}

We first prove item \eqref{it:cormix}. For continuous-time reversible Markov chain, by \cite[Theorem $12.5$, $20.6$]{LPW17} we bound the total variation mixing time by relaxation time via
$$\dfrac{1}{\lambda_2(-M^f_{\epsilon,c})} \log 2 \leq t_{mix}(M^f_{\epsilon,c},1/4) \leq \dfrac{1}{\lambda_2(-M^f_{\epsilon,c})} \log\left(\dfrac{4}{\pi^f_{min}}\right),$$
where $\pi_{min}^f := \min_x \pi^f_{\epsilon,c}(x) = \dfrac{\mu(x^*)}{Z^f_{\epsilon,c}},$ for some $x^* \in S_{\textrm{min}}$ and $Z^f_{\epsilon,c} := \sum_{x \in \mathcal{X}} e^{-\mcH_{\epsilon,c}^f(x)}\mu(x)$ is the normalization constant. Note that since $ \ln Z^f_{\epsilon,c} \to \ln \mu(S_{\textrm{min}})$ and so
$$\lim_{\epsilon \to 0} \epsilon \ln Z^f_{\epsilon,c} = 0.$$
Item \eqref{it:cormix} follows by collecting the above results together with Theorem \ref{thm:asympspectral}.

Next, we prove item \eqref{it:cortun}. First, using the random target lemma \cite[Section $4.2$]{AF14}, we have
$$\pi^f_{\epsilon,c}(\eta) \E_{\sigma}(\tau^f_\eta) \leq \sum_{x \in \mathcal{X}} \pi^f_{\epsilon,c}(x) \E_{\sigma}(\tau^f_x) \leq (|\mathcal{X}|-1) \dfrac{1}{\lambda_2(-M^f_{\epsilon,c})}.$$
Since $\mcH(\eta) = 0$ and $Z^f_{\epsilon,c} \leq 1$, upon rearranging and using Theorem \ref{thm:asympspectral} yields
$$\limsup_{\epsilon \to 0} \epsilon \log \E_{\sigma}(\tau^f_\eta) \leq \lim_{\epsilon \to 0} \epsilon \log \dfrac{1}{\lambda_2(-M^f_{\epsilon,c})} = c^*.$$

Define the equilibrium potential and capacity of the pair $(\sigma,\eta)$ as in \cite[Chapter $7.2$]{BH15} to be respectively
\begin{align*}
	h_{\sigma,\eta}^{f}(x) &:= \mP_x(\tau^f_{\sigma} < \tau^f_{\eta}), \\
	\mathrm{cap}^{M^f_{\epsilon,c}}(\sigma,\eta) &:= \inf_{f: f|_{A} = 1, f|_{B} = 0} \langle -M^f_{\epsilon,c}f,f \rangle_{\pi^f_{\epsilon,c}} = \langle -M^f_{\epsilon,c} h_{\sigma,\eta}^{f}, h_{\sigma,\eta}^{f} \rangle_{\pi^f_{\epsilon,c}}.
\end{align*}
If we prove that 
\begin{align}\label{eq:capM2}
	\mathrm{cap}^{M^f_{\epsilon,c}}(\sigma,\eta) \leq \dfrac{1}{Z^f_{\epsilon,c}} \left(\sum_{x,y} \mu(x) Q(x,y)\right) e^{-\frac{1}{\epsilon} G^0(\sigma,\eta)},
\end{align}
then together the mean hitting time formula with equilibrium potential and capacity leads to
\begin{align*}
	\E_{\sigma}(\tau^f_{\eta}) &= \dfrac{1}{\mathrm{cap}^{M^f_{\epsilon,c}}(\sigma,\eta)} \sum_{y \in \mathcal{X}} \pi_{\epsilon,c}^f(y) h_{\sigma,\eta}^{f}(y) \\
	&\geq \dfrac{1}{\mathrm{cap}^{M^f_{\epsilon,c}}(\sigma,\eta)} \pi_{\epsilon,c}^f(\sigma) h_{\sigma, \eta}^{f}(\sigma) \geq \dfrac{\mu(\sigma)}{\sum_{x,y} \mu(x) Q(x,y)} e^{\frac{1}{\epsilon} c^*},
\end{align*}
and the desired result follows since
$$\liminf_{\epsilon \to 0} \epsilon \log \E_{\sigma}(\tau^f_\eta) \geq c^*.$$
It therefore remains to prove \eqref{eq:capM2}. Define
$$\Phi(\sigma,\eta) := \{x\in\mathcal{X};~ G^0(x, \sigma) \leq G^0(x,\eta)\}.$$
Writing $\mathbf{1}_{A}$ to be the indicator function of the set $A$, the Dirichlet principle of capacity gives
\begin{align*}
	\mathrm{cap}^{M^f_{\epsilon,c}}(\sigma,\eta) \leq \langle -M^f_{\epsilon,c}\mathbf{1}_{\Phi(\sigma,\eta)},\mathbf{1}_{\Phi(\sigma,\eta)}\rangle_{\pi^f_{\epsilon,c}} &= \dfrac{1}{Z^f_{\epsilon,c}}\sum_{x \in \Phi(\sigma,\eta), y \notin \Phi(\sigma,\eta)}  e^{-\frac{1}{\epsilon} \left( H(x) \vee H(y) \right)} \mu(x) Q(x,y) \\
	&\leq \dfrac{1}{Z^f_{\epsilon,c}} \left(\sum_{x,y} \mu(x) Q(x,y)\right) e^{-\frac{1}{\epsilon} G^0(\sigma,\eta)}.
\end{align*}
where in the last inequality we use the fact that $G^0(\sigma,\eta)$ is the lowest possible highest elevation between $\sigma$ and $\eta$.

\subsection{Proof of Lemma \ref{lem:sa}}

We first prove item \eqref{it:timed1}. The lower bound is immediate, while the upper bound can be deduced via
$$\epsilon_t \mcH^f_{\epsilon_t,c} \leq \epsilon_t \int_{\mcH_{\textrm{min}}}^{\mcH(x)} \dfrac{1}{\epsilon_t} \,du \leq M.$$

Next, we prove item \eqref{it:timed2}. We consider
\begin{align*}
	\dfrac{\partial}{\partial t} \epsilon_t \mcH^f_{\epsilon_t,c}(x) 
	&= 	\mcH^f_{\epsilon_t,c}(x) \left(\dfrac{\partial}{\partial t} \epsilon_t\right) + \epsilon_t \dfrac{\partial}{\partial t} \mcH^f_{\epsilon_t,c}(x) \\
	&= \mcH^f_{\epsilon_t,c}(x) \dfrac{-c^* - \epsilon}{(\ln(t+1))^2} \dfrac{1}{t+1} + \epsilon_t \epsilon_t^{\prime} \int_{\mcH_{\textrm{min}}}^{\mcH(x)} - \dfrac{1}{(f((u-c)_+) + \epsilon_t)^2} \, du.
\end{align*}
This leads to
\begin{align*}
	\left|\dfrac{\partial}{\partial t} \epsilon_t \mcH^f_{\epsilon_t,c}(x)\right| \leq \dfrac{2M}{(\ln(t+1))(t+1)}.
\end{align*}

Thirdly, we prove item \eqref{it:timed3}, and using item \eqref{it:timed2} we calculate that
\begin{align*}
	\beta_t^{\prime} M + \beta_t R_t &\leq \dfrac{M}{(c^*+\epsilon)(t+1)} + \dfrac{2M}{(c^*+\epsilon)(1+t)} = \dfrac{3M}{(c^*+\epsilon)(1+t)}.
\end{align*}

Finally, we prove item \eqref{it:timed4}. Following exactly the same calculation as in the proof of \cite[Lemma $3.5$]{Lowe96}, we see that
\begin{align*}
	\norm{g - \pi_{\epsilon_t,c}^f(g)}_{\ell^p(\pi^f_{\epsilon_t,c})}^2 &\leq A   \left(1 + \beta_t(\max \mcH - c)\right) e^{\beta_t(c^* + M \frac{p-2}{p})}\langle -M^f_{\epsilon_t,c}g,g \rangle_{\pi^f_{\epsilon_t,c}} \\
	&\leq A e^{\beta_t(c^* + M \frac{p-2}{p} + \max \mcH - c)}\langle -M^f_{\epsilon_t,c}g,g \rangle_{\pi^f_{\epsilon_t,c}}.
\end{align*}
The desired result follows if we let
$$\epsilon = \dfrac{M(p-2)}{p} + \max \mcH - c$$
so that
$$p = \dfrac{2M}{M + \max \mcH - \epsilon - c}.$$

\subsection{Proof of Theorem \ref{thm:mainsaland}}

We would like to invoke the results in \cite{Lowe96} for time-dependent target function in simulated annealing. In Lemma \ref{lem:sa} item \eqref{it:timed1}, \eqref{it:timed2}, \eqref{it:timed3} and \eqref{it:timed4}, we verify that equation $(11)$, $(12)$ and Assumption $(A1)$, $(A2)$ respectively hold in \cite{Lowe96}. Consequently, if we let $h_t(y) := \mathbb{P}_x(X^f_{\epsilon_t,c}(t) = y)/\pi^f_{\epsilon_t,c}(y)$, then according to \cite[Lemma $1.7$]{HS88}, its $\ell^2$ norm is bounded by
$$\norm{h_t}_{\ell^2(\pi^f_{\epsilon_t,c})} \leq 1+K^{\frac{1}{2\overline{\epsilon}}},$$
for $t \geq e^{1/\overline{\epsilon}}-1$. The desired result follows from exactly the same argument as in \cite[Theorem $3.8$]{Lowe96}.

Now, we calculate $\pi^f_{\epsilon_t,c}(\mathcal{X} \backslash S_{\textrm{min}})$. For $c \geq \underline{d}$, we compute that
\begin{align*}
	\pi^f_{\epsilon_t,c}(\mathcal{X} \backslash S_{\textrm{min}}) &= \dfrac{\sum_{x \in \mathcal{X} \backslash S_{\textrm{min}}} e^{-\mcH_{\epsilon_t,c}^f(x) }\mu(x)}{\sum_{x \in \mathcal{X}} e^{-\mcH_{\epsilon_t,c}^f(x)}\mu(x)} \\
	&\leq \dfrac{\sum_{x;~ \mcH(x) \geq \underline{d}} \mu(x)}{\sum_{x; \mcH(x) \leq \underline{d}} \exp\{\int_{\mcH(x)}^{\underline{d}} \frac{1}{\epsilon_t}\,du\}\mu(x)} \\
	&\leq e^{-\frac{1}{\epsilon_t}(\underline{d} - \mcH_{\textrm{min}})} \dfrac{1}{\mu(S_{\textrm{min}})}.
\end{align*}
On the other hand, for $\mcH_{\textrm{min}} \leq c < \underline{d}$,
\begin{align*}
	\pi^f_{\epsilon_t,c}(\mathcal{X} \backslash S_{\textrm{min}}) &= \dfrac{\sum_{x \in \mathcal{X} \backslash S_{\textrm{min}}} e^{-\mcH_{\epsilon_t,c}^f(x) }\mu(x)}{\sum_{x \in \mathcal{X}} e^{-\mcH_{\epsilon_t,c}^f(x)}\mu(x)} \\
	&\leq \dfrac{1}{\mu(S_{\textrm{min}})}\exp\bigg\{-\int_{\mcH_{\textrm{min}}}^{\underline{d}} \dfrac{1}{f((u-c)_+) + \epsilon_t}du\bigg\} \\
	&\leq \dfrac{1}{\mu(S_{\textrm{min}})}\exp\bigg\{-\int_{c}^{\underline{d}} \dfrac{1}{f(u-c) + \epsilon_t}du\bigg\}.
\end{align*}

\section*{Acknowledgements}

We thank Laurent Miclo for pointers to the work of Olivier Catoni, and the two reviewers for careful reading and constructive feedback. The author acknowledges the financial support from the startup grant of National University of Singapore, Yale-NUS College and a Singapore MoE Tier 1 grant entitled "MAPLE".

\bibliographystyle{abbrvnat}
\bibliography{thesis}

\end{document}